\documentclass[sigconf,nonacm]{acmart}

\renewcommand\footnotetextcopyrightpermission[1]{}
\usepackage{amsmath}
\usepackage{amsthm}
\usepackage{mathtools}
\usepackage{booktabs}
\usepackage{tabularx}
\usepackage{array}
\usepackage{graphicx}
\usepackage{microtype}
\usepackage{xspace}
\usepackage{cleveref}

\ifPDFTeX
\fi
\ifXeTeX
  
\fi

\providecommand{\tightlist}{%
  \setlength{\itemsep}{0pt}%
  \setlength{\parskip}{0pt}%
}

\newtheorem{proposition}{Proposition}

\title{From Intent to Execution Grant:\\
An Execution-Boundary Conformance Profile for High-Risk AI Actions}

\author{Mengting Wu}
\authornote{Corresponding author.}
\email{chloe@havenlon.com}
\affiliation{%
  \institution{Chengdu Havenlon Security Technology Co., Ltd.}
  \city{Chengdu}
  \country{China}
}

\author{Lin Wang}
\affiliation{%
  \institution{Chengdu Havenlon Security Technology Co., Ltd.}
  \city{Chengdu}
  \country{China}
}

\author{Yong Zhang}
\affiliation{%
  \institution{Chengdu Havenlon Security Technology Co., Ltd.}
  \city{Chengdu}
  \country{China}
}

\author{Jiang Deng}
\affiliation{%
  \institution{Chengdu Havenlon Security Technology Co., Ltd.}
  \city{Chengdu}
  \country{China}
}

\hypersetup{
  pdftitle={From Intent to Execution Grant: An Execution-Boundary Conformance Profile for High-Risk AI Actions},
  pdfauthor={Mengting Wu, Lin Wang, Yong Zhang, Jiang Deng},
  pdfsubject={EBL-Core execution-boundary conformance profile},
  pdfkeywords={AI agents, execution boundaries, authorization, runtime enforcement, formal semantics, conformance profiles}
}

\begin{document}

\begin{abstract}
AI agents increasingly propose externally consequential actions, including financial transfers, infrastructure changes, software deployments, information disclosures, and physical actuation. Authorization engines, policy languages, runtime monitors, provenance mechanisms, and agent guardrails provide important control foundations, but their interfaces do not necessarily define a common semantic contract for the final transition from a particular candidate action to execution authority.

We specify EBL-Core, an execution-boundary conformance profile for determining whether one canonical, fully materialized AI-generated candidate may receive action-scoped execution authority under explicit conditions. EBL-Core relates a structured intent object, a Root Policy, an Operational Policy, root and operational Evidence Obligations, typed evidence, explicit context and time, and a verifiable Decision Derivation. These elements are bound through an Execution Release Contract (ERC), defined as a decision-binding release-condition object. An ERC is not itself an authority-bearing token; a verified \texttt{ALLOW} ERC may support issuance of a separate Execution Grant, whose exercise is governed by Redemption-time validation.

The contribution is the semantic release-and-redemption contract joining these existing mechanism classes, together with conformance requirements for action binding, policy non-weakening, evidence-obligation handling, deterministic adjudication, derivation verification, and grant lifecycle behavior. EBL-Core does not establish human-intent correctness, evidence truth, complete mediation, global non-bypassability, faithful execution, or correct external outcomes; those properties remain conditional on explicit deployment and trust assumptions.

The accompanying minimal reference artifact instantiates one financial-transfer profile with machine-readable schemas, a reference adjudicator, a separately implemented verifier and Semantic Replay path, and a linearizable in-memory grant store. In the retained run, 34 static vectors and 15 lifecycle and mutation checks matched their specified outcomes; 100 trials of 32 concurrent Redemption attempts produced exactly one successful Redemption and one protected test effect per trial, and 100 Revoke--Redeem races ended in a valid terminal outcome. These results demonstrate executability of the specified subset, not production readiness, mechanized correctness, or deployment-level security.
\end{abstract}

\keywords{AI agents, execution boundaries, authorization, runtime enforcement, formal semantics, conformance profiles}

\maketitle

\section{Introduction}\label{introduction}

\subsection{Motivation}\label{motivation}

AI systems are increasingly used not only to generate information, recommendations, or plans, but also to propose actions that can change external state. Such actions include initiating financial operations, modifying production infrastructure, deploying software, disclosing protected information, and actuating physical devices. In these settings, an agent's output is not necessarily the terminal result of computation. It may instead become an input to an execution mechanism that possesses credentials, invokes tools, commits transactions, or controls devices.

This transition changes the relevant safety question. For an information-producing system, evaluation can often focus on the quality, correctness, or acceptability of generated content. For an action-producing system, an additional question arises: under what conditions may a particular generated action acquire the authority required to affect an external system?

The distinction is important because an AI-generated action is typically constructed through several stages. A user request may be interpreted into an internal goal, refined into a plan, translated into one or more tool invocations, and finally materialized as a concrete operation containing execution-relevant parameters. These parameters may include a financial amount and recipient, an infrastructure resource and configuration change, a software artifact and deployment environment, a data object and disclosure destination, or a device command and target state. The final action may therefore differ materially from the natural-language request or intermediate plan from which it was derived.

At the same time, the conditions governing an action may change during this process. Evidence may expire, operational state may change, a policy version may be replaced, an approval may cover an earlier object rather than the final one, or the candidate action may be modified after adjudication. A decision that was justified at one stage is not necessarily valid at the moment execution authority is exercised.

These observations motivate an explicit semantic interface between action proposal and externally consequential execution.

\subsection{The Execution-Release Problem}\label{the-execution-release-problem}

Many authorization systems determine whether a principal or structured request is permitted to perform an operation on a resource under applicable policies and contextual attributes. Some can evaluate highly specific requests, including exact operation parameters. These capabilities remain necessary for AI-agent systems: neither an agent nor a supporting service should obtain authority beyond the permissions assigned to the relevant identities, roles, credentials, and policies.

The execution-boundary question specializes this decision rather than replacing it:

\begin{quote}
What minimum semantic release-and-redemption contract must hold before one canonical, fully materialized AI-generated candidate may receive action-scoped execution authority?
\end{quote}

The distinguishing issue is not merely whether an authorization engine can inspect detailed inputs. It is whether the system defines a common contract that binds the final candidate to all decision-relevant conditions and preserves those bindings through the release and Redemption of authority. Depending on the action, this contract may require that:

\begin{itemize}
\tightlist
\item
  the candidate remains within the immutable constraints and permitted refinement dimensions of a trusted intent object;
\item
  the candidate is canonical and fully materialized before adjudication;
\item
  the Root Policy permits the candidate;
\item
  the Operational Policy independently permits the candidate without weakening the Root Policy;
\item
  every obligation in \(Q_K \cup Q_P\) is discharged by evidence classified \texttt{VALID};
\item
  the governing policy, evidence, context, schema, and profile versions are explicit;
\item
  a supplied Decision Derivation verifies under the committed inputs;
\item
  decision-relevant changes trigger denial or re-adjudication; and
\item
  any resulting authority is scoped to the committed candidate and governed by the specified grant lifecycle.
\end{itemize}

We refer to this as the \textbf{execution-release problem}. It concerns the semantic transition from a concrete action proposal to conditions under which action-scoped execution authority may be issued and redeemed.

These requirements can be implemented using existing authorization languages, proof systems, capability mechanisms, and runtime monitors. EBL-Core does not claim that such mechanisms are unable to represent them. Its purpose is to define which bindings must be jointly present for an implementation to conform to a shared execution-release profile.

The distinction also separates semantic adjudication from deployment enforcement. An \texttt{ALLOW} decision and a valid ERC do not by themselves establish that execution is causally dependent on the decision. Complete mediation, correct Grant Issuance, linearized Redemption, faithful realization of the candidate by the Effector, protection of trusted components, and exclusion of alternative authority paths remain deployment properties.

\subsection{Relationship to Existing Abstractions}\label{relationship-to-existing-abstractions}

The execution-release problem builds on several established areas of research and engineering.

Authorization languages define policies over principals, actions, resources, attributes, and contextual inputs. Policy decision points and policy enforcement points separate the evaluation of authorization rules from the application mechanisms that enforce the resulting decisions. These abstractions provide the foundation for representing and evaluating many of the constraints considered in this paper.

Reference monitors characterize the architectural requirements for enforcing a security policy, including complete mediation, tamper resistance, and verifiability. Runtime-assurance architectures similarly place an assured decision component between an untrusted or insufficiently assured controller and a protected system. These approaches establish why a decision mechanism must be positioned outside the authority of the component whose actions it governs.

Agent guardrails and runtime policy systems apply related principles to model-generated actions. They may constrain tool selection, validate arguments, enforce preconditions, restrict privileges, monitor execution histories, or route selected operations through human review. Provenance and audit systems record the origin and evolution of requests, decisions, and execution events. Proof-carrying authorization systems further demonstrate how a decision can be accompanied by evidence that it follows from specified policies and credentials.

These mechanisms address substantial portions of the execution-control problem. The objective of this paper is not to characterize them as generally insufficient, but to identify the integration contract required at a particular boundary. A general authorization engine need not prescribe how natural-language input becomes a trusted intent object. A runtime monitor need not define a portable evidence-obligation representation. A provenance system may describe recorded events without deciding whether a candidate should receive execution authority. Depending on its interface, a guardrail may block a tool invocation without exposing a Decision Derivation that an independent verifier can check. Similarly, an authorization derivation may establish permission from declared premises without specifying how candidate mutation, state change, ERC Generation, Grant Issuance, and Redemption affect the continued usability of that result.

Consequently, two conforming and individually well-designed systems may still disagree about:

\begin{itemize}
\tightlist
\item
  which object was authorized;
\item
  which evidence was required;
\item
  which policy versions governed the decision;
\item
  whether an earlier approval covers a refined candidate;
\item
  when a decision becomes stale;
\item
  what authority an \texttt{ALLOW} result releases; and
\item
  what an independent verifier must reconstruct.
\end{itemize}

The residual abstraction considered here is a shared contract for these execution-release semantics. Such a contract can be implemented using existing policy languages, proof systems, reference monitors, and agent runtimes. It does not require replacing them.

\subsection{Execution-Boundary Semantics}\label{execution-boundary-semantics}

We define an \textbf{execution boundary} as the last scoped mediation point at which a proposed action can still be denied before an externally relevant effect occurs. The definition is scoped to declared action classes and execution paths. Whether every path capable of producing a protected effect actually traverses the boundary is a deployment property.

EBL-Core specifies the semantic behavior required at this boundary. Its adjudication function has the form

\[
\Gamma_K(P,I,x,Ev,Ctx,t)\rightarrow(d,r,\pi),
\]

where \(I\) is a trusted intent object; \(x\) is one canonical, fully materialized candidate action; \(K\) is the versioned Root Policy; \(P\) is the versioned Operational Policy; \(Ev\) is the complete materialized evidence input; \(Ctx\) is explicit decision-relevant context; \(t\) is explicit adjudication time; \(d\) is the decision; \(r\) is a stable reason code; and \(\pi\) is a Decision Derivation.

The applicable Evidence Obligations are derived separately:

\[
Q_K=Obligations_K(K,I,x,Ctx)
\]

and

\[
Q_P=Obligations_P(P,I,x,Ctx).
\]

The Operational Policy may add obligations through \(Q_P\), but it cannot remove, weaken, rename, or reinterpret obligations in \(Q_K\). This requirement is part of \textbf{root-policy dominance}. It is distinct from conjunctive rule addition and from the compatibility relation used to classify Operational Policy evolution.

The candidate, rather than an earlier plan or textual request, is the immediate object of adjudication. \texttt{Bound(I,x,Ctx,t)} permits explicitly authorized refinement but requires the final candidate to preserve immutable intent constraints. This relation establishes consistency with the trusted intent object; it does not establish that the object correctly represents a human's latent intent.

For a positive decision, every applicable obligation in \(Q_K\cup Q_P\) must resolve to \texttt{VALID}. \texttt{UNKNOWN}, \texttt{MISSING}, \texttt{EXPIRED}, and \texttt{CONFLICT} remain distinct diagnostic states, but none discharges a positive obligation. Evidence classification concerns the declared evidence interface and does not establish external truth.

Adjudication is connected to authority through four distinct operations:

\begin{verbatim}
Adjudicate(B) -> (d, r, pi)
GenerateERC(B, d, r, pi, ...) -> erc | bottom
IssueGrant(erc, B, pi) -> g | bottom
Redeem(g, erc, B, pi, R) -> EXECUTE | DENY
\end{verbatim}

ERC Generation creates a decision-binding release-condition object. The ERC commits to the adjudicated inputs, obligations, decision, reason code, validity conditions, and supplied Decision Derivation, but is not inherently authority-bearing. Grant Issuance may release action-scoped execution authority only from a verified \texttt{ALLOW} ERC. Under the baseline profile, the grant begins in \texttt{ISSUED} and a successful, linearized Redemption changes it to \texttt{CONSUMED}; \texttt{CONSUMED}, \texttt{EXPIRED}, and \texttt{REVOKED} are terminal for that grant instance.

EBL-Core defines semantic properties required of conforming adjudicators, ERC generators, Grant Issuers, verifiers, and Redemption Interfaces. Stronger claims about external effects require the designated grant to be necessary for the scoped action, the protected transition and grant-state update to be linearized, the Effector to realize the committed candidate faithfully, and alternative effect-producing paths to be excluded or separately governed.

\subsection{Contributions}\label{contributions}

This paper makes three contributions.

\begin{enumerate}
\def\labelenumi{\arabic{enumi}.}
\item
  \textbf{An execution-boundary conformance model.}\\
  We characterize the execution-release problem and define the typed actors, objects, lifecycle stages, trust assumptions, and deployment boundary required to connect one canonical AI-generated candidate to action-scoped execution authority.
\item
  \textbf{Intent-bound, evidence-aware, non-weakening adjudication semantics.}\\
  We specify adjudication over trusted intent, one canonical candidate, Root Policy, Operational Policy, separately derived obligations \(Q_K\) and \(Q_P\), typed evidence, explicit context, and explicit time. The profile separates root-policy dominance, conjunctive rule addition, and policy-version evolution, and requires deterministic, side-effect-free, input-closed, and bounded evaluation.
\item
  \textbf{The Execution Release Contract.}\\
  We define the ERC as a decision-binding release-condition object connecting adjudication to separate Grant Issuance and Redemption operations. It commits to the exact candidate and decision state, including a supplied Decision Derivation, and provides a common verification and conformance target without defining a new general-purpose capability primitive.
\end{enumerate}

Together, these contributions define an execution-release profile that can be implemented over heterogeneous authorization, evidence, capability, and runtime mechanisms. The included artifact validates a bounded transfer profile; interoperability among independently developed implementations remains an empirical question for future conformance studies.

\section{Related Work}\label{related-work}

This section organizes related work by the abstraction each research family treats as primary. The purpose is not to argue that existing mechanisms cannot implement EBL requirements. In many cases, they can and should serve as EBL backends. The relevant distinction is between capabilities that a system may provide and obligations that its primary abstraction specifies by default.

EBL does not claim that policy evaluation, capability security, proof-carrying authorization, runtime mediation, or agent action control is new. Its proposed residual is a common execution-release profile that specifies which versioned facts must jointly justify an action-scoped execution grant, how that grant is bound to the final candidate action, and when the grant must be rejected or re-adjudicated.

\subsection{Authorization Languages}\label{authorization-languages}

Authorization languages define how policies determine whether an identified principal may perform an action on a resource under a given set of attributes and environmental conditions. Attribute-based access control provides the general principal--operation--object--environment foundation on which these policy engines build~\cite{nist-sp800-162}. They provide the most direct foundation for EBL's policy-evaluation semantics.

Cedar~\cite{policy-cedar-2024} defines an expressive and analyzable authorization language organized around principal, action, resource, and context. Its design includes schema-based validation, explicit permit and forbid policies, default denial, and formal modeling suitable for policy analysis. Cedar demonstrates that a policy language can combine practical evaluation performance with a semantics precise enough to support equivalence and authorization analysis.

Open Policy Agent and Rego~\cite{policy-opa-rego} provide a domain-independent policy-decision mechanism over structured input. Rego separates policy evaluation from application logic and permits applications to request structured decisions from an external or embedded policy engine. This separation allows the same policy framework to govern API access, infrastructure configuration, deployment controls, and other application-specific decisions.

XACML~\cite{policy-xacml-3} defines a standardized attribute-based access-control architecture involving policy administration, policy decision, policy information, and policy enforcement points. It also defines \texttt{Permit}, \texttt{Deny}, \texttt{NotApplicable}, and several forms of \texttt{Indeterminate}, together with policy-combining algorithms, obligations, and advice.

\emph{Owned abstraction.}\\
These systems own the abstraction of a policy decision over structured authorization inputs. Their central question is whether a request, usually represented through subject, action, resource, and environment attributes, is permitted under a policy set.

\emph{What EBL inherits.}\\
EBL inherits declarative policy evaluation, explicit decision inputs, schema or type validation, default-denial behavior, policy-combining semantics, separation of decision logic from application code, and the treatment of unavailable information as distinct from positive authorization. EBL's root and operational policies need not be evaluated by a new engine; a conforming implementation could compile them to Cedar or Rego or express them through an XACML profile.

\emph{What remains unspecified.}\\
Within the base abstractions and documented integrations considered here, authorization languages do not uniformly require one execution-release contract relating trusted intent, one canonical candidate, Evidence Obligations, policy versions, Decision-Derivation Verification, Grant Issuance, and Redemption. Individual requirements may be expressed using attributes, policies, obligations, or application logic, but their cross-component semantics depend on the integration.

In particular, a policy decision does not necessarily specify:

\begin{itemize}
\tightlist
\item
  how a trusted intent constrains the candidate's permitted refinements;
\item
  whether the decision covers the exact canonical candidate submitted for Redemption;
\item
  how \(Q_K\) and \(Q_P\) are derived and protected from operational weakening;
\item
  how evidence freshness, absence, or conflict affects those obligations;
\item
  which candidate, policy, evidence, context, and time changes invalidate the result;
\item
  which Decision Derivation an independent verifier must accept;
\item
  whether a decision may support Grant Issuance; or
\item
  how grant state and current conditions are checked during Redemption.
\end{itemize}

EBL-Core does not claim greater general policy expressiveness. It makes these conditions mandatory within a restricted execution-release conformance profile.

\subsection{Capability and Proof-Carrying Authorization}\label{capability-and-proof-carrying-authorization}

Capability-based systems~\cite{cap-dennis-vanhorn-1966} represent authority through protected references or tokens that designate both an object and the operations permitted on it. Possession of an appropriate capability is a prerequisite for exercising the associated authority. Capability systems support least authority by allowing rights to be scoped, delegated, attenuated, and, depending on the design, revoked or time-bounded. Macaroons illustrate how contextual caveats can be bound to decentralized authorization credentials without prescribing EBL's candidate and Redemption lifecycle~\cite{cap-macaroons-2014}.

Proof-Carrying Authorization extends authorization with machine-checkable evidence that a request follows from a set of policies and credentials. In a typical design, an untrusted requester supplies credentials and a logical derivation, while a trusted verifier checks whether the derivation establishes the requested authorization. The Proof-Carrying Authorization System~\cite{proof-pca-2001} establishes this general architecture, while systems such as the Proof-Carrying File System~\cite{proof-pcfs-2010} connect proof verification to dynamic policies and conditional capabilities.

\emph{Owned abstraction.}\\
Capability systems own the representation and controlled exercise of authority. Proof-carrying authorization owns the derivation and verification of an authorization conclusion from declared policies, credentials, and trusted roots.

\emph{What EBL inherits.}\\
EBL inherits the principle that authorization should be represented by an object whose scope can be independently checked, rather than by an informal statement that an action was approved. It also inherits the distinction between proof construction and proof verification, the use of explicit trusted roots, and the requirement that a protected operation depend on successful verification.

The EBL decision derivation is consequently not proposed as a new proof-carrying primitive. It is an execution-specific derivation whose obligations may be implemented using established authorization logics or proof systems.

\emph{What remains unspecified.}\\
General capability and proof-carrying authorization models do not, by themselves, standardize the lifecycle through which an AI-generated proposal becomes a final candidate action. They do not necessarily prescribe:

\begin{itemize}
\tightlist
\item
  a trusted representation of the request-level intent;
\item
  a refinement relation between that intent and the materialized action;
\item
  commitments to the final payload and its execution-relevant parameters;
\item
  typed evidence obligations and their freshness conditions;
\item
  invalidation after policy, evidence, context, or candidate mutation;
\item
  a common execution-release certificate format; or
\item
  a redemption-time check against the current decision-relevant state.
\end{itemize}

These properties can be represented in sufficiently expressive authorization logics. EBL's proposed contribution is to make them mandatory parts of a specific conformance profile.

The distinction is also epistemic. A proof-carrying authorization mechanism may establish that an authorization conclusion follows from supplied policies, credentials, and premises. Within EBL-Core, successful Decision-Derivation Verification similarly establishes only that the recorded decision follows under the declared profile semantics, committed inputs, and trust assumptions. It does not establish that the evidence premises accurately describe the external world or that the authorized effect subsequently occurred.

\subsection{Reference Monitor and Runtime Assurance}\label{reference-monitor-and-runtime-assurance}

A reference monitor~\cite{runtime-reference-monitor} is an architectural abstraction for policy enforcement. Its reference-validation mechanism is expected to mediate every relevant access, resist tampering, and remain sufficiently small and well specified to support analysis. Complete mediation requires authorization to be reconsidered on every relevant access because decision state may change~\cite{saltzer-schroeder-1975}; execution monitoring further distinguishes policies enforceable from observed event prefixes~\cite{schneider-2000}. The reference-monitor concept therefore addresses where enforcement authority must reside and what structural properties the enforcement mechanism must possess.

Runtime Assurance and Simplex architectures~\cite{runtime-simplex-1996,runtime-assurance-2021} apply a related pattern to safety-critical control. An unverified advanced controller may operate while a trusted monitor determines whether its behavior remains within a safety envelope. When the relevant safety condition can no longer be maintained, control is transferred to an assured or reduced-capability controller. These architectures separate high-performance but insufficiently assured decision-making from the component responsible for preserving system safety.

\emph{Owned abstraction.}\\
Reference monitors own complete mediation and protected policy enforcement. Runtime-assurance systems own the supervisory relationship between an untrusted controller, a safety monitor, and an assured fallback or recovery controller.

\emph{What EBL inherits.}\\
EBL inherits the separation between an untrusted action proposer and a trusted enforcement decision. A model, planner, tool selector, or policy generator may propose an action or a policy update, but it is not the final authority for releasing action-scoped execution authority.

EBL also inherits fail-closed mediation and the principle that a reduced operating mode should be represented as an explicitly restricted capability regime rather than as an ambiguous authorization result. If a deployment supports a safe mode, its available actions and transition rules must be specified independently from the Boolean decision concerning an individual candidate action.

\emph{What remains unspecified.}\\
A reference monitor is parametrized by the policy it enforces. The abstraction does not determine which intent, approval, evidence, policy-version, or action-identity obligations should govern an AI-generated transaction. Runtime-assurance frameworks commonly focus on whether a controller's proposed behavior preserves a state-space safety condition. They do not necessarily define the authorization and evidence semantics required for heterogeneous digital actions involving identities, recipients, assets, infrastructure resources, disclosure destinations, or externally issued approvals.

Conversely, EBL semantics do not establish that an implementation possesses reference-monitor properties. EBL-Core can require an Execution Grant to remain bound to a verified Decision Derivation and committed decision state, but those semantic requirements do not establish that every effect-producing path requires the grant. Complete mediation, protection of the verifier, exclusive control of the necessary capability, and exclusion of alternative execution paths remain deployment assumptions.

Usage-control models provide a further foundation by treating authorizations, obligations, mutable attributes, and continuing decisions during usage as first-class concepts~\cite{park-sandhu-2004}. EBL inherits the need to reconsider current conditions but narrows its scope to a candidate-bound release and single-use Redemption contract. The relationship is therefore complementary: reference-monitor and usage-control work define structural and continuing enforcement concerns, while EBL defines the proposed execution-release decision profile that such mechanisms may enforce.

\subsection{AI Agent Action Control}\label{ai-agent-action-control}

Recent agent-security systems place enforcement closer to model-generated tool calls and action proposals. This work establishes that action-time mediation for agents is already an active research area.

AgentSpec~\cite{agent-agentspec-2025} introduces a domain-specific language for specifying runtime constraints through triggers, predicates, and enforcement mechanisms. It demonstrates that agent behavior can be checked against structured rules across code execution, embodied-agent, and autonomous-driving scenarios.

Progent~\cite{agent-progent-2025} represents agent privilege through symbolic policies over tool names and arguments. Every tool call is checked through a deterministic procedure. Progent also distinguishes narrowing policy updates from privilege expansions, using an SMT solver to permit automatic narrowing while requiring separate approval for expansion. This makes monotonic confinement an explicit part of the agent-policy lifecycle.

ToolGate~\cite{agent-toolgate-2026} models tools using preconditions and postconditions over a typed symbolic state. Preconditions control invocation, while postconditions determine whether tool results may update the trusted symbolic state. ToolGate therefore treats tool execution as a contract-governed state transition rather than as an unconstrained continuation of model reasoning.

Intent-Governed Access Control~\cite{agent-igac-2026} converts a trusted request into a short-lived intent certificate, narrows the statically authorized tool manifest, and checks proposed tool and payload effects before execution. IGAC explicitly distinguishes static-policy non-expansion from request-level confinement and treats the model, intent classifier, and planner as non-authoritative components.

FORGE~\cite{agent-forge-2026} treats policy enforcement as a cross-cutting concern independent of agent reasoning. It uses Datalog policies, a reference monitor, and an observability service governed by an assume--guarantee contract. Its policies may depend on causal execution history and may be enforced across multiple agents at policy-relevant actions.

Atomic Decision Boundaries~\cite{boundary-adb-2026} makes the relationship between decision validity and protected state transition explicit. Its primary abstraction is the atomic or linearized boundary needed to avoid a gap between checking decision-relevant state and applying the corresponding effect. EBL-Core inherits this requirement for Redemption; it does not claim to originate atomic decision-effect coupling.

Proof-Carrying Agent Actions~\cite{boundary-pcaa-2026} introduces portable action certificates and runtime verification for agent actions. Its primary abstraction is a proof-carrying action artifact that can accompany an action across system boundaries. EBL-Core does not claim the first portable action certificate or proof-carrying agent action. Its residual concerns the particular decision inputs, policy and obligation separation, ERC role, grant lifecycle, and Redemption semantics required by its conformance profile.

\emph{Owned abstraction.}\\
AgentSpec owns a runtime constraint language for agent behavior; Progent owns symbolic tool-policy enforcement and monotonic privilege management; ToolGate owns typed precondition and postcondition checking for tool-mediated state transitions; IGAC owns request-derived intent certification and payload-level confinement; FORGE owns formal, history-aware runtime policy enforcement; Atomic Decision Boundaries owns decision-effect coupling and linearization; and Proof-Carrying Agent Actions owns portable action certificates and runtime verification.

\emph{What EBL inherits.}\\
EBL-Core inherits external mediation of model-generated actions, typed tool and candidate representations, deterministic policy evaluation, request-scoped confinement, non-expanding policy changes, history- and context-sensitive predicates, portable verification artifacts, and linearized checking at the protected transition. These systems establish substantial portions of the mechanism space on which EBL-Core builds.

\emph{What remains unspecified.}\\
Across the systems compared here, the properties required by EBL-Core are distributed rather than uniformly specified by one shared profile. The residual addressed by EBL-Core is the joint contract among:

\begin{itemize}
\tightlist
\item
  one canonical, fully materialized candidate;
\item
  a trusted intent object and its permitted refinement relation;
\item
  a Root Policy and Operational Policy;
\item
  separately identified \(Q_K\) and \(Q_P\);
\item
  complete typed evidence and explicit context;
\item
  a verifiable Decision Derivation;
\item
  an ERC that is not inherently authority-bearing;
\item
  separate ERC Generation and Grant Issuance;
\item
  the \texttt{ISSUED}, \texttt{CONSUMED}, \texttt{EXPIRED}, and \texttt{REVOKED} grant lifecycle; and
\item
  single-use, linearized Redemption at the protected transition.
\end{itemize}

This residual is narrower than discovering action-time mediation, intent certificates, action certificates, deterministic tool gates, atomic boundaries, or proof-carrying authorization. EBL-Core instead defines the contents and lifecycle that an implementation must expose to claim conformance with this particular execution-release profile.

Provenance and remote-attestation architectures address another complementary interface. W3C PROV models entities, activities, agents, and derivation relations~\cite{w3c-prov-dm}; RATS separates an Attester, Verifier, and Relying Party when evidence is appraised~\cite{rfc9334}. EBL-Core can consume records produced by such systems, but its \texttt{VALID} status remains an obligation-relative adjudication result rather than a general provenance or attestation truth claim.

\subsection{The Execution-Release Gap}\label{the-execution-release-gap}

The preceding work establishes the component abstractions on which EBL-Core relies. Table~\ref{tab:abstraction-matrix} maps six close baselines to the joint obligations of the profile. The classification concerns what each cited abstraction specifies directly, not what a sufficiently programmable implementation could encode. \textbf{N} denotes native semantics, \textbf{A} an explicit adapter convention, \textbf{E} an external component, and \textbf{U} a property not established by the cited abstraction. An \textbf{A}, \textbf{E}, or \textbf{U} entry is not a quality judgment.

\begin{table*}[t]
\centering
\scriptsize
\setlength{\tabcolsep}{3pt}
\renewcommand{\arraystretch}{1.16}
\begin{tabularx}{\textwidth}{@{}>{\raggedright\arraybackslash}p{0.20\textwidth}*{6}{>{\centering\arraybackslash}X}@{}}
\toprule
\textbf{Primary abstraction} & \textbf{Intent--candidate} & \textbf{Policy--obligation split} & \textbf{Verifiable derivation} & \textbf{Contract--authority split} & \textbf{Grant lifecycle} & \textbf{Linearized Redemption} \\
\midrule
Cedar~\cite{policy-cedar-2024} & A & A & U & E & E & E \\
Proof-Carrying Authorization~\cite{proof-pca-2001,proof-pcfs-2010} & A & A & N & A & E & E \\
Progent~\cite{agent-progent-2025} & E & A & U & U & E & E \\
Intent-Governed Access Control~\cite{agent-igac-2026} & N & A & U & U & E & E \\
Atomic Decision Boundaries~\cite{boundary-adb-2026} & E & E & E & E & E & N \\
Proof-Carrying Agent Actions~\cite{boundary-pcaa-2026} & A & E & N & U & E & E \\
\midrule
\textbf{EBL-Core} & \textbf{N} & \textbf{N} & \textbf{N} & \textbf{N} & \textbf{N} & \textbf{N} \\
\bottomrule
\end{tabularx}
\caption{Conservative mapping of documented primary abstractions to EBL-Core obligations. The table distinguishes semantic ownership from expressibility and does not rank implementation strength or performance.}
\label{tab:abstraction-matrix}
\end{table*}

No comparison row is expected to reproduce EBL-Core because each baseline owns a different abstraction. The residual is the mandatory composition of one canonical candidate, trusted-intent binding, non-weakening policy and obligation separation, typed evidence resolution, a verifiable Decision Derivation, an ERC distinct from released authority, and single-use linearized Redemption. Cedar, Rego, or XACML may evaluate predicates; a proof system may encode a derivation; a capability system may represent the grant; and a reference monitor may govern Redemption. EBL-Core defines the observable joint contract by which that composition can claim conformance.

\section{System Model and Execution Release Contract}\label{system-model-and-execution-release-contract}

\subsection{System Overview}\label{system-overview}

We consider an AI-assisted system in which an agent may propose an action whose execution can produce an externally consequential state transition. The agent does not obtain execution authority merely by generating a syntactically valid action. Instead, the proposed action passes through logically distinct stages:

\begin{verbatim}
Intent Establishment
        |
Candidate Action Materialization
        |
Adjudication
        |
ERC Generation
        |
Grant Issuance
        |
Grant Redemption
        |
External Effect
\end{verbatim}

These stages define semantic responsibilities rather than a required deployment topology. A conforming system may colocate several stages within one service or distribute them across multiple trust domains. Properties that depend on separation remain conditional on the deployment.

\textbf{Intent establishment} converts an authenticated instruction into a structured intent object \(I\). The object identifies the authority under which an action may be considered and the constraints that must survive subsequent refinement. EBL-Core begins after this object has been established; it does not define how natural-language requests are interpreted.

\textbf{Candidate-action materialization} produces one concrete candidate \(x\). Every decision-relevant parameter required to identify the proposed operation must have a committed value. A plan, operation class, partial payload, unresolved parameter, or later-selected target is not a materialized EBL-Core candidate.

\textbf{Adjudication} evaluates the candidate against the intent object, the applicable root-policy version, the operational-policy version, their separately derived evidence obligations, explicit context, and explicit time. It returns a decision, a reason, and a decision derivation.

\textbf{ERC generation} binds the adjudication result to its decision inputs and release conditions. The resulting Execution Release Contract is a decision-binding release-condition object. It is not inherently authority-bearing.

\textbf{Grant issuance} releases action-scoped execution authority from a verified \texttt{ALLOW} ERC. Issuance creates a grant in the \texttt{ISSUED} state. A \texttt{DENY} ERC cannot support grant issuance.

\textbf{Grant redemption} validates whether the issued authority may be exercised for the exact candidate under current decision-relevant conditions. The baseline profile permits one successful redemption. Successful redemption changes the grant state from \texttt{ISSUED} to \texttt{CONSUMED}.

\textbf{External effect} is the protected state transition produced through the governed interface. Let \(S\) denote protected state and let

\[
\Delta:S\times X\rightharpoonup S
\]

be the partial transition function for candidate actions. EBL-Core requires validation, single-use grant consumption, and the protected transition to be linearized with respect to decision-relevant state. It specifies this semantic requirement without prescribing an implementation mechanism.

The resulting path is:

\[
I\longrightarrow x
\longrightarrow\Gamma_K(P,I,x,Ev,Ctx,t)
\longrightarrow erc
\longrightarrow g
\longrightarrow\Delta(s,x).
\]

An adjudication result is not an ERC; an ERC is not an execution grant; successful grant redemption does not establish that the external action completed correctly or produced its intended outcome.

\subsection{Semantic Objects}\label{semantic-objects}

Let the semantic domains be:

\[
\begin{aligned}
I &\in Intent,\\
x &\in Candidate,\\
Ev &\in EvidenceSet,\\
P &\in OperationalPolicy,\\
K &\in RootPolicy,\\
Ctx &\in Context,\\
t &\in Time,\\
erc &\in ExecutionReleaseContract,\\
g &\in Grant.
\end{aligned}
\]

We additionally use:

\[
d\in\{ALLOW,DENY\},\qquad
r\in Reason,\qquad
\pi\in Derivation,
\]

and

\[
state(g)\in
\{ISSUED,CONSUMED,EXPIRED,REVOKED\}.
\]

For a semantic object \(o\), let

\[
id_v(o)=Commit_v(Canon_v(o))
\]

denote its identity commitment under profile version \(v\). We omit the subscript on \(id\) where \(v\) is fixed. \(Canon_v\) is a deterministic canonical representation, and \(Commit_v\) is an abstract commitment mechanism selected by the applicable profile. A concrete profile may instantiate these operations using a specified canonical encoding and collision-resistant digest, such as a JSON canonicalization scheme and a named hash function~\cite{rfc8785}.

\subsubsection{Intent}\label{intent}

An intent object \(I\) is a structured input established by an identified Intent Authority:

\[
\begin{aligned}
I=\langle{}&subject,purpose,scope,constraints,\\
&refinement,authority,validity\rangle.
\end{aligned}
\]

The relation

\[
Bound(I,x,Ctx,t)
\]

holds when the exact candidate \(x\) is an authorized materialization of \(I\) under context \(Ctx\) at time \(t\). This relation concerns conformance to the structured intent object. It does not establish that \(I\) correctly represents latent or natural-language human intent.

\subsubsection{Candidate Action}\label{candidate-action}

A candidate action is one fully materialized proposed operation:

\[
\begin{aligned}
x=\langle{}&actor,operation,target,parameters,\\
&declaredEffects,authorityScope\rangle.
\end{aligned}
\]

\texttt{Complete(x)} holds only when every decision-relevant parameter has one committed value. A parameter domain permitted by \(I\) must therefore be resolved to a concrete value before adjudication. If a target, payload, recipient, artifact, or other effect-relevant value is selected or changed later, the result is a different candidate requiring a new adjudication.

Two candidates are equivalent only when their canonical decision-relevant representations are equivalent:

\[
x_1\equiv x_2
\iff
Canon_v(x_1)=Canon_v(x_2).
\]

The applicable candidate profile must identify every field that may affect the governed operation or its declared effect. The baseline EBL-Core profile does not authorize a set of alternative candidates through one ERC. Set-valued or bounded-action authorization is reserved for a future extension profile.

Candidate completeness and declared-effect binding do not establish that the Effector will implement the declared effect faithfully. Effector correctness remains a deployment assumption.

\subsubsection{Evidence}\label{evidence}

The submitted evidence set is finite:

\[
Ev=\{e_1,\ldots,e_n\}.
\]

Root and operational evidence obligations are generated separately:

\[
Q_K=Obligations_K(K,I,x,Ctx),
\]

\[
Q_P=Obligations_P(P,I,x,Ctx),
\]

and composed as:

\[
Q=Q_K\cup Q_P.
\]

Each obligation identifies its policy origin and the versioned resolution semantics under which it is evaluated. Operational policy \(P\) cannot remove, weaken, rename, or reinterpret an obligation in \(Q_K\). It may introduce additional obligations through \(Q_P\).

For each \(q\in Q\),

\[
\begin{aligned}
Status(q,Ev,Ctx,t)\in\{&VALID,UNKNOWN,MISSING,\\
&EXPIRED,CONFLICT\}.
\end{aligned}
\]

Only \texttt{VALID} discharges a positive obligation:

\[
Discharged(q,Ev,Ctx,t)
\iff
Status(q,Ev,Ctx,t)=VALID.
\]

These values classify obligation resolution under the declared evidence profile. They do not assign context-free truth values to external assertions.

\subsubsection{Root Policy}\label{root-policy}

The root policy \(K\) defines non-overridable constraints for one identified policy regime and version. It is fixed for a particular adjudication and recorded in the ERC. Root-policy dominance does not imply that \(K\) is correct, complete, or immutable across administrative updates.

Operational policy cannot change:

\begin{itemize}
\tightlist
\item
  the semantics of \(Permit_K\);
\item
  the contents or interpretation of \(Q_K\);
\item
  the canonicalization rules applied to root-policy objects; or
\item
  the root-policy version recorded for the decision.
\end{itemize}

A change to \(K\) produces a different adjudication input and invalidates baseline reuse of the earlier ERC.

Within a profile, a policy-version identifier is immutable and content-bound: it denotes one canonical policy object and its declared evaluation semantics. Any semantic change requires a new identifier; reuse after a content change is non-conformant.

\subsubsection{Operational Policy}\label{operational-policy}

The operational policy \(P\) contains mutable authorization, workflow, and deployment restrictions. It may deny additional candidates or generate additional evidence obligations.

Define:

\[
Permit_K(I,x,Ev,Ctx,t)
\]

and

\[
Permit_P(I,x,Ev,Ctx,t).
\]

The combined direct policy condition is conjunctive:

\[
Permit_{K,P}
=
Permit_K\land Permit_P.
\]

This conjunction is evaluated together with the separately composed obligation set \(Q_K\cup Q_P\). Root-policy dominance follows only under the constraint that \(P\) cannot modify the root-policy predicate or root obligations.

\subsubsection{Context and Time}\label{context-and-time}

\(Ctx\) is the finite, explicit decision-relevant context supplied to adjudication. Upstream state may be projected into \(Ctx\), but the resulting projection must contain every contextual field on which the decision depends.

Time \(t\) is a separate explicit input. Evidence freshness, intent validity, policy validity, and other time-dependent predicates are evaluated against \(t\). Trust in the supplied time source is a deployment assumption.

\subsubsection{Adjudication}\label{adjudication}

Define the complete adjudication input bundle:

\[
B=
\langle
profileVersion,K,P,I,x,Ev,Ctx,t
\rangle.
\]

The adjudication function is:

\[
\Gamma_K(P,I,x,Ev,Ctx,t)
\rightarrow
(d,r,\pi).
\]

For well-formed inputs, \texttt{ALLOW} is produced only if:

\[
\begin{aligned}
&WellFormed(B)\\
{}\land{}&Bound(I,x,Ctx,t)\\
{}\land{}&Permit_K(I,x,Ev,Ctx,t)\\
{}\land{}&Permit_P(I,x,Ev,Ctx,t)\\
{}\land{}&\forall q\in Q_K:
Status(q,Ev,Ctx,t)=VALID\\
{}\land{}&\forall q\in Q_P:
Status(q,Ev,Ctx,t)=VALID.
\end{aligned}
\]

The decision derivation \(\pi\) binds the complete input bundle, the separately generated obligation sets, the evidence classifications, the applied inference steps, and the final decision and reason.

\subsubsection{Execution Release Contract}\label{execution-release-contract}

An ERC binds an adjudication result to the conditions under which action-scoped authority may be issued and redeemed. An ERC concerns exactly one canonical candidate action.

\subsection{Execution Release Contract}\label{execution-release-contract-1}

Conceptually, an ERC is:

\[
\begin{aligned}
erc=\langle
&ercVersion,profileVersion,id(B),\\
&id(I),id(x),authorityScope,\\
&version(K),version(P),id(Q_K),id(Q_P),id(Ev),\\
&id(Ctx),t,interval,d,r,id(\pi),issuer,nonce
\rangle.
\end{aligned}
\]

Here \(t\) is the explicit adjudication time already contained in \(B\); redemption time is written separately as \(t_r\). The ERC may contain commitments rather than complete embedded objects, provided that a verifier can obtain the bound objects through a defined verification context.

An ERC is a decision-binding release-condition object. It is not inherently an authority-bearing token; it specifies the conditions under which authority may be released. A deployment may embed an ERC, an ERC commitment, or an ERC reference in a capability or credential format, but the following semantic roles remain distinct:

{\small
\begin{verbatim}
ERC:
    Which decision and release conditions were established?

Execution Grant:
    Which action-scoped authority was actually released?

Redemption:
    May that authority be exercised now?
\end{verbatim}
}

A nonce distinguishes an ERC or issuance instance and supports identity and correlation. It does not, by itself, enforce single use or prevent replay. Replay prevention requires grant-state consumption or an equivalent linearizable mechanism.

\subsubsection{ERC versus an Audit Record}\label{erc-versus-an-audit-record}

An audit record describes an event for later inspection. An ERC is evaluated prospectively as part of grant issuance or redemption. It may subsequently be retained in an audit trail, but its existence does not establish that:

\begin{itemize}
\tightlist
\item
  a grant was issued;
\item
  the grant was redeemed;
\item
  the Effector executed the candidate; or
\item
  the expected external outcome occurred.
\end{itemize}

\subsubsection{ERC versus a Decision-Derivation Record}\label{erc-versus-a-decision-derivation-record}

A decision-derivation record explains how an evaluator derived a decision. The ERC additionally binds that derivation to:

\begin{itemize}
\tightlist
\item
  the exact intent and candidate;
\item
  the applicable authority scope;
\item
  root and operational policy versions;
\item
  root and operational evidence obligations;
\item
  the submitted evidence;
\item
  decision-relevant context and time;
\item
  validity conditions;
\item
  issuer identity; and
\item
  grant-release conditions.
\end{itemize}

The ERC commits to the supplied derivation through \(id(\pi)\). An independently generated derivation need not have the same serialization, provided that it verifies against the same input bundle and yields the same semantic decision and reason.

\subsubsection{Action Binding}\label{action-binding}

An ERC is action-bound when:

\[
erc.candidateId=id(x).
\]

A grant issued from the ERC may be redeemed only for that candidate:

\[
id(x')\neq erc.candidateId
\implies
Redeem(g,erc,x',\ldots)=DENY.
\]

The authority representation may narrow the permissions needed to execute \(x\), but it cannot substitute another candidate. Authorization of a candidate set is outside the EBL-Core baseline.

\subsubsection{State and Policy Binding}\label{state-and-policy-binding}

The adjudication fingerprint is:

\[
\begin{aligned}
F(B)=Commit(&
profileVersion,id(I),id(x),\\
&version(K),version(P),id(Q_K),id(Q_P),\\
&id(Ev),id(Ctx),t).
\end{aligned}
\]

The ERC records \(F(B)\), the component commitments needed to reconstruct it, or both.

The baseline profile requires the root- and operational-policy versions at redemption to match the versions used at adjudication. A policy-version change invalidates reuse of the earlier ERC and requires re-adjudication.

A separately established policy-evolution compatibility relation may classify a new operational policy as no more permissive than an earlier one. That classification does not, by itself, preserve a previously issued \texttt{ALLOW} decision or authorize reuse of an old grant.

\subsubsection{Evidence Binding}\label{evidence-binding}

The ERC records both \(Q_K\) and \(Q_P\), together with a commitment to the complete evidence set used for adjudication. Recording evidence without its obligations is insufficient because it does not show whether all required positive premises were considered.

Evidence binding establishes that identified evidence was evaluated under declared obligation and resolution rules. It does not establish the external truth of an assertion. At redemption, all time- and state-dependent evidence conditions must still hold. Otherwise, the grant is rejected or the candidate is re-adjudicated.

\subsubsection{Validity Interval}\label{validity-interval}

Each positive ERC contains a validity interval:

\[
[t_{start},t_{end}].
\]

This interval cannot extend any applicable validity condition imposed by the intent, root policy, operational policy, evidence obligations, or context. An empty effective interval cannot support grant issuance.

\subsubsection{Grant Lifecycle}\label{grant-lifecycle}

The baseline grant lifecycle is:

\begin{verbatim}
                  +----------> EXPIRED
                  |
ISSUED -----------+----------> REVOKED
  |
  +-- successful redemption -> CONSUMED
\end{verbatim}

\texttt{CONSUMED}, \texttt{EXPIRED}, and \texttt{REVOKED} are terminal for the grant instance. A second redemption attempt against any terminal state returns \texttt{DENY}.

Redemption, revocation, and expiry are competing transitions from \texttt{ISSUED}. Their authoritative state changes must be linearized. If revocation or expiry linearizes first, a concurrent Redemption observes a terminal state and is denied. If successful Redemption linearizes first, the grant becomes \texttt{CONSUMED} and a later revocation or expiry request cannot change that terminal state. This ordering requirement follows the standard linearizability criterion for concurrent objects~\cite{herlihy-wing1990}.

Grant issuance requires a verified \texttt{ALLOW} ERC and creates \(g\) such that:

\[
state(g)=ISSUED
\]

and

\[
GrantBoundTo(g,erc)=true.
\]

\subsubsection{Redemption Conditions}\label{redemption-conditions}

Let \(B_a\) be the original adjudication bundle, \(\pi\) its supplied derivation, and let

\[
R=\langle K_r,P_r,x_r,Ev_r,Ctx_r,t_r\rangle
\]

be the redemption-time inputs. Redemption is eligible only if:

\[
\begin{aligned}
&ValidRedeem(g,erc,B_a,\pi,R)\\
\iff{}&
VerifyERC(erc,B_a,\pi)=VALID\\
{}\land{}&erc.decision=ALLOW\\
{}\land{}&state(g)=ISSUED\\
{}\land{}&GrantBoundTo(g,erc)\\
{}\land{}&id(x_r)=erc.candidateId\\
{}\land{}&Scope(g)\subseteq erc.authorityScope\\
{}\land{}&t_r\in erc.interval\\
{}\land{}&CurrentConditionsHold(erc,B_a,R).
\end{aligned}
\]

In the baseline profile, \texttt{CurrentConditionsHold} is not an implementation-defined catch-all. It requires: exact Root and Operational Policy version equality; equality of the committed evidence and context identities; continued \(Bound(I,x_r,Ctx_r,t_r)\); continued satisfaction of both policy predicates; \texttt{VALID} status for every obligation in the committed \(Q_K\cup Q_P\); and satisfaction of every declared temporal predicate. Any extension predicate must be named, versioned, and included in the ERC commitment.

A successful redemption performs the logical transition:

\[
\begin{aligned}
\langle state(g)=ISSUED,s\rangle
&\xrightarrow{Redeem(g,x_r)}\\[-0.2em]
&\langle state(g)=CONSUMED,\Delta(s,x_r)\rangle.
\end{aligned}
\]

Validation, the \texttt{ISSUED}-to-\texttt{CONSUMED} transition, and the protected effect must be linearized with respect to decision-relevant state. Acceptable realizations may include an atomic transaction, version-conditional commit, transactional state transition, or another mechanism with equivalent semantics. A separate recheck followed by an interleavable effect does not satisfy this requirement.

EBL-Core specifies the linearization requirement, not the implementation mechanism.

\subsection{Threat Model and Non-Goals}\label{threat-model-and-non-goals}

The adversary may control the Agent and untrusted content available to it. The adversary may attempt to:

\begin{itemize}
\tightlist
\item
  construct an action inconsistent with the established intent;
\item
  substitute or mutate a candidate after adjudication;
\item
  omit execution-relevant parameters;
\item
  supply malformed, missing, expired, or conflicting evidence;
\item
  cause operational policy to omit root-policy obligations;
\item
  replay an earlier ERC or grant;
\item
  race two redemption attempts;
\item
  exploit a state transition between validation and effect;
\item
  present an ERC under different policy or profile versions; or
\item
  cause disagreement between an adjudicator and verifier.
\end{itemize}

Under the base model, operational policy cannot change \(Permit_K\), \(Q_K\), or the resolution semantics of root obligations. This restriction establishes root-policy dominance within the declared policy regime. It does not establish that the root policy is correct or that its administrative authority can never replace it.

The semantic properties of EBL-Core include:

\begin{itemize}
\tightlist
\item
  deterministic semantic decisions and reasons over equivalent closed inputs;
\item
  root-policy dominance under constrained obligation composition;
\item
  binding to one canonical candidate;
\item
  rejection when a positive evidence obligation is not discharged;
\item
  verification of a decision derivation against the complete input bundle;
\item
  single-use grant-consumption semantics; and
\item
  rejection when current redemption conditions do not match the ERC.
\end{itemize}

The following properties remain deployment assumptions:

\begin{itemize}
\tightlist
\item
  complete mediation of every relevant effect-producing path;
\item
  integrity and availability of trusted inputs;
\item
  protection of the applicable root-policy version;
\item
  correct implementation of the verifier, issuer, and Effector;
\item
  exclusive control of the authority needed for the protected effect;
\item
  linearizable redemption, consumption, and protected transition;
\item
  prevention of unauthorized credential extraction or delegation; and
\item
  faithful execution of the candidate by the Effector.
\end{itemize}

The model does not claim global non-bypassability, correct human-intent understanding, evidence truth, root-policy correctness, correct external outcomes, or security after compromise of every trusted role.

\begin{figure*}[t]
  \centering
  \includegraphics[width=\textwidth]{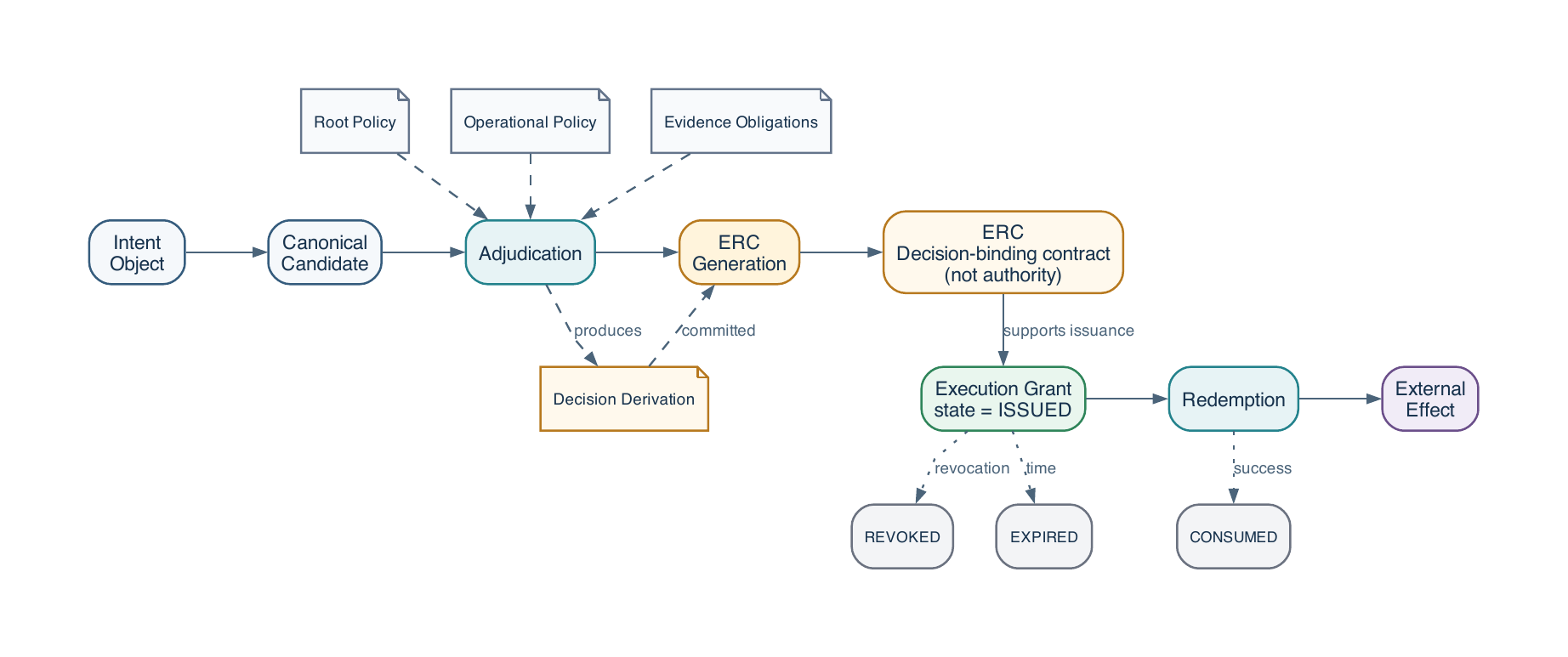}
  \caption{Execution-release lifecycle in EBL-Core. Adjudication consumes the Root Policy, Operational Policy, and Evidence Obligations and produces a Decision Derivation. ERC Generation creates a decision-binding contract, while separate Grant Issuance creates action-scoped authority. Redemption validates current conditions and linearizes the protected effect with the grant-state transition; \texttt{CONSUMED}, \texttt{EXPIRED}, and \texttt{REVOKED} are terminal grant states.}
  \label{fig:execution-release-lifecycle}
\end{figure*}

\section{Formal Execution-Boundary Semantics}\label{formal-execution-boundary-semantics}

\subsection{Semantic Domains and Canonical Identity}\label{semantic-domains-and-canonical-identity}

Let:

\[
B=
\langle
v,K,P,I,x,Ev,Ctx,t
\rangle
\]

be the complete input bundle for one adjudication, where \(v\) is the EBL-Core profile version.

For each decision-relevant object \(o\),

\[
id(o)=Commit_v(Canon_v(o)).
\]

The applicable profile defines canonical equality \(o_1\equiv_v o_2\) over all decision-relevant fields.

EBL-Core requires:

\begin{enumerate}
\def\labelenumi{\arabic{enumi}.}
\item
  \textbf{Canonical determinism}

  \[
  o_1\equiv_v o_2
  \implies
  Canon_v(o_1)=Canon_v(o_2).
  \]
\item
  \textbf{Computational binding}

  For distinct canonical decision-relevant objects, finding \(o_1\not\equiv_v o_2\) such that

  \[
  Commit_v(Canon_v(o_1))
  =
  Commit_v(Canon_v(o_2))
  \]

  is computationally infeasible under the selected commitment assumption.
\item
  \textbf{Stable replay identity}

  Implementations using the same profile and semantic object derive the same canonical representation and commitment.
\end{enumerate}

Canonical identity establishes equality within the model. It does not establish authenticity, authorization, provenance, or external truth.

\subsection{Intent-to-Candidate Binding Semantics}\label{intent-to-candidate-binding-semantics}

An intent object \(I\) is a trusted structured input, not a representation of unobservable mental intent. Let:

\[
Complete_{Core}(x)
\]

hold only when \(x\) denotes one candidate action and every execution-relevant parameter has a concrete committed value.

The EBL-Core binding predicate is:

{\small
\[
\begin{aligned}
Bound(I,x,Ctx,t)
\iff{}&
WellFormedIntent(I)\\
{}\land{}&AuthorizedIntentIssuer(I)\\
{}\land{}&t\in Window(I)\\
{}\land{}&Complete_{Core}(x)\\
{}\land{}&Scope(x)\sqsubseteq Scope(I)\\
{}\land{}&AuthorizedRefinement(I,x,Ctx,t)\\
{}\land{}&PreservesImmutableConstraints(I,x,Ctx,t).
\end{aligned}
\]
}

Permitted refinement may resolve an intent-authorized domain to one concrete operation, target, or parameter value. It cannot leave an execution-relevant choice to be selected after adjudication.

Consequently:

\[
x=\{x_1,\ldots,x_n\}
\quad\text{or}\quad
x=\text{an unresolved bounded action family}
\]

does not satisfy \(Complete_{Core}(x)\). Support for action families requires an extension profile with separate identity, refinement, and redemption semantics.

Intent binding establishes conformance between structured objects. It does not establish that \(I\) correctly captures a human request or that execution of \(x\) will achieve its declared purpose.

\subsection{Evidence-Obligation Semantics}\label{evidence-obligation-semantics}

Root and operational obligations are generated independently:

\[
Q_K=Obligations_K(K,I,x,Ctx),
\]

\[
Q_P=Obligations_P(P,I,x,Ctx),
\]

\[
Q=Q_K\cup Q_P.
\]

An obligation contains at least:

\begin{itemize}
\tightlist
\item
  a policy-origin identifier;
\item
  an assertion, subject, and scope;
\item
  admissible evidence types and sources;
\item
  freshness conditions;
\item
  conflict and aggregation rules; and
\item
  the versioned resolution semantics used to classify it.
\end{itemize}

For every \(q\in Q_K\), its identity and resolution semantics are determined by \(K\). Operational policy cannot remove, weaken, replace, or reinterpret it.

For each \(q\in Q\):

\[
\begin{aligned}
Status(q,Ev,Ctx,t)\in\{&VALID,UNKNOWN,MISSING,\\
&EXPIRED,CONFLICT\}.
\end{aligned}
\]

The status function is deterministic under the identified evidence profile. A profile must define mutually exclusive classification rules or a total precedence relation for overlapping diagnostic conditions. In particular, no obligation may be classified \texttt{VALID} while an applicable unresolved conflict remains.

The discharge predicate is:

\[
Discharged(q,Ev,Ctx,t)
\iff
Status(q,Ev,Ctx,t)=VALID.
\]

Define:

\[
\begin{aligned}
EvidenceOK_K(K,I,x,Ev,Ctx,t)&\iff{}\\
&\forall q\in Q_K:\ Discharged(q,Ev,Ctx,t),
\end{aligned}
\]

and:

\[
\begin{aligned}
EvidenceOK_P(P,I,x,Ev,Ctx,t)&\iff{}\\
&\forall q\in Q_P:\ Discharged(q,Ev,Ctx,t).
\end{aligned}
\]

The combined evidence condition is:

\[
EvidenceOK_{K,P}
=
EvidenceOK_K\land EvidenceOK_P.
\]

Therefore:

\[
Status(q,Ev,Ctx,t)\neq VALID
\implies
\neg Discharged(q,Ev,Ctx,t).
\]

These judgments establish satisfaction of declared obligations under the profile. They do not establish the external truth of the underlying assertion.

\subsection{Policy-Composition Semantics}\label{policy-composition-semantics}

Define:

{\small
\[
\begin{aligned}
RootOK_K(I,x,Ev,Ctx,t)
\iff{}&
Permit_K(I,x,Ev,Ctx,t)\\
{}\land{}&EvidenceOK_K(K,I,x,Ev,Ctx,t),
\end{aligned}
\]

and:

\[
\begin{aligned}
OperationalOK_P(I,x,Ev,Ctx,t)
\iff{}&
Permit_P(I,x,Ev,Ctx,t)\\
{}\land{}&EvidenceOK_P(P,I,x,Ev,Ctx,t).
\end{aligned}
\]
}

For the empty operational policy:

\[
Permit_{\varnothing}=true,
\qquad
Q_{\varnothing}=\varnothing,
\qquad
OperationalOK_{\varnothing}=true.
\]

For fixed \(\eta=\langle I,Ev,Ctx,t\rangle\), define:

\[
\begin{aligned}
Allow_{\eta}(K,P)=
\{x\mid{}&
WellFormed(B)\\
{}\land{}&Bound(I,x,Ctx,t)\\
{}\land{}&RootOK_K(I,x,Ev,Ctx,t)\\
{}\land{}&OperationalOK_P(I,x,Ev,Ctx,t)
\}.
\end{aligned}
\]

\subsubsection{Root-Policy Dominance}\label{root-policy-dominance}

Provided that \(P\) cannot modify \(Permit_K\), \(Q_K\), or the resolution semantics of root obligations:

\[
Allow_{\eta}(K,P)
\subseteq
Allow_{\eta}(K,\varnothing).
\]

This property is relative to one fixed root-policy version \(K\). It does not claim that \(K\) is correct, complete, or immutable across administrative changes.

\subsubsection{Conjunctive Rule Addition}\label{conjunctive-rule-addition}

Let \(P'=P\cup\{p\}\), where rule \(p\) is composed conjunctively:

\[
Permit_{P'}=Permit_P\land Permit_p
\]

and:

\[
Q_{P'}=Q_P\cup Q_p.
\]

Then:

\[
Allow_{\eta}(K,P')
\subseteq
Allow_{\eta}(K,P).
\]

This property applies only to rule addition under the specified conjunction and obligation-union semantics. It does not apply to arbitrary replacement, deletion, reprioritization, or reinterpretation of policy rules.

\subsubsection{Policy-Version Evolution}\label{policy-version-evolution}

Define a separate compatibility relation:

\[
Compat_K(P_{old},P_{new})
\]

only when an appropriate validation procedure establishes:

{\small
\[
\forall I,Ev,Ctx,t:
Allow_{\langle I,Ev,Ctx,t\rangle}(K,P_{new})
\subseteq
Allow_{\langle I,Ev,Ctx,t\rangle}(K,P_{old}).
\]
}

Version order, naming, or the presence of additional rules does not establish this relation.

\texttt{Compat\_K} classifies the policy change as non-expanding under fixed \(K\). It does not imply that every action previously allowed by \(P_{old}\) remains allowed by \(P_{new}\), and it does not automatically preserve grants issued under \(P_{old}\). Baseline ERC reuse requires exact policy-version equality.

A change to \(K\) is outside this operational-policy relation and requires a new adjudication.

\subsection{Adjudication Function}\label{adjudication-function}

The positive predicate is:

\[
\begin{aligned}
&CoreOK_K(P,I,x,Ev,Ctx,t)\\
\iff{}&WellFormed(B)\\
{}\land{}&Bound(I,x,Ctx,t)\\
{}\land{}&Permit_K(I,x,Ev,Ctx,t)\\
{}\land{}&Permit_P(I,x,Ev,Ctx,t)\\
{}\land{}&\forall q\in Q_K:
Status(q,Ev,Ctx,t)=VALID\\
{}\land{}&\forall q\in Q_P:
Status(q,Ev,Ctx,t)=VALID.
\end{aligned}
\]

The positive rule is:

\[
\frac{CoreOK_K(P,I,x,Ev,Ctx,t)}
{
\Gamma_K(P,I,x,Ev,Ctx,t)
=
(ALLOW,OK,\pi_{allow})
}
\tag{E-Allow}
\]

where:

\[
VerifyDerivation(\pi_{allow},B,ALLOW,OK)=true.
\]

If \texttt{CoreOK} does not hold:

\[
\frac{\neg CoreOK_K(P,I,x,Ev,Ctx,t)}
{
\Gamma_K(P,I,x,Ev,Ctx,t)
=
(DENY,FirstFailure(B),\pi_{deny})
}
\tag{E-Deny}
\]

with:

\[
VerifyDerivation(
\pi_{deny},
B,
DENY,
FirstFailure(B)
)=true.
\]

The profile defines a deterministic priority relation over simultaneous failures and commits that relation through \(profileVersion\). For EBL-Core, let the ordered failure sequence be:

{\small
\begin{equation}
\begin{aligned}
\mathcal{F}(B)=\langle{}
&(INPUT\_INVALID,\neg WellFormed(B)),\\
&(INTENT\_BINDING\_FAILED,\neg Bound(I,x,Ctx,t)),\\
&(ROOT\_POLICY\_DENY,\neg Permit_K(I,x,Ev,Ctx,t)),\\
&(ROOT\_EVIDENCE\_UNSATISFIED,\\
&\qquad\neg EvidenceOK_K(K,I,x,Ev,Ctx,t)),\\
&(OPERATIONAL\_POLICY\_DENY,\neg Permit_P(I,x,Ev,Ctx,t)),\\
&(OPERATIONAL\_EVIDENCE\_UNSATISFIED,\\
&\qquad\neg EvidenceOK_P(P,I,x,Ev,Ctx,t))
\rangle .
\end{aligned}
\label{eq:failure-precedence}
\end{equation}
}

\(FirstFailure(B)\) is the reason in the least-indexed pair whose predicate is true. Predicates after the first structurally undefined predicate are not evaluated; \texttt{INPUT\_INVALID} therefore dominates all semantic failures. Status selection within an individual obligation follows the separate total precedence declared by its evidence-resolution profile. Equation~\eqref{eq:failure-precedence} fixes primary-reason selection without collapsing the full set of diagnostic failures that a derivation may record.

ERC eligibility is:

{\small
\[
EligibleForERC(B,d,r,\pi)
\iff
VerifyDerivation(\pi,B,d,r)=true.
\]
}

\subsubsection{Determinism}\label{determinism}

For canonically equivalent bundles evaluated under the same semantic profile, conforming implementations must produce the same semantic decision and primary reason:

\[
Canon_v(B_1)=Canon_v(B_2)
\implies
d_1=d_2
\land
r_1=r_2.
\]

Each supplied derivation must verify:

\[
VerifyDerivation(\pi_i,B_i,d_i,r_i)=true.
\]

Different conforming implementations may use different valid derivation encodings. EBL-Core does not require:

\[
id(\pi_1)=id(\pi_2).
\]

An individual ERC nevertheless commits to the particular derivation supplied with that ERC.

\subsubsection{Side-Effect Freedom, Input Closure, and Bounded Evaluation}\label{side-effect-freedom-input-closure-and-bounded-evaluation}

Adjudication does not modify external state, policy state, evidence sources, grant state, or protected system state.

Its semantic result depends only on \(B\). A conforming evaluator does not consult an undeclared clock, network source, mutable service, randomness source, or hidden model inference.

Each profile specifies an input fragment and resource-bound function:

\[
Steps(\Gamma,B)\leq Bound_v(|B|).
\]

This is a conformance requirement, not a claim that a particular implementation has already been measured or verified against the bound.

\subsection{Decision-Derivation Semantics}\label{decision-derivation-semantics}

A decision derivation is a structured witness for one adjudication:

\[
\begin{aligned}
\pi=\langle{}&encodingVersion,profileVersion,\\
&inputCommitments,policyCommitments,\\
&rootObligations,operationalObligations,\\
&evidenceStatuses,inferenceSteps,\\
&decision,reason\rangle.
\end{aligned}
\]

Define:

\[
VerifyDerivation(\pi,B,d,r)=true
\]

only if:

\begin{enumerate}
\def\labelenumi{\arabic{enumi}.}
\tightlist
\item
  the derivation and \(B\) identify the same supported profile version;
\item
  the derivation commits to every component of \(B\);
\item
  the recorded root and operational policy versions match \(B\);
\item
  the recorded obligation sets equal the obligation sets generated from \(B\);
\item
  every evidence status follows from the committed evidence, context, time, and obligation semantics;
\item
  every derivation leaf corresponds to a committed input;
\item
  every inference step is admitted by the profile;
\item
  no required positive or negative premise has been omitted;
\item
  the terminal judgment is \(d\) with primary reason \(r\); and
\item
  the supplied derivation conforms to its declared encoding version.
\end{enumerate}

\texttt{VerifyDerivation} is profile-relative. It is not a universal proof system.

An ERC additionally checks:

\[
id(\pi)=erc.derivationId.
\]

This equality binds the ERC to the particular supplied derivation. It does not require an independent evaluator to serialize its own derivation identically.

Let:

\[
Replay_m(B)=(d_m,r_m,\pi_m)
\]

be replay by conforming implementation \(m\). Replay is semantically consistent with an ERC when:

\[
d_m=erc.decision,
\qquad
r_m=erc.reason,
\]

and:

\[
VerifyDerivation(\pi_m,B,d_m,r_m)=true.
\]

Different valid derivation encodings may therefore support the same replay result.

\subsection{Invalidation, Grant State, and Atomic Redemption}\label{invalidation-grant-state-and-atomic-redemption}

Define the adjudication fingerprint:

\[
\begin{aligned}
F(B)=Commit(&
profileVersion,id(I),id(x),\\
&version(K),version(P),id(Q_K),id(Q_P),\\
&id(Ev),id(Ctx),t).
\end{aligned}
\]

A baseline ERC becomes non-reusable if:

\begin{itemize}
\tightlist
\item
  the candidate identity changes;
\item
  the root- or operational-policy version changes;
\item
  the committed evidence set changes;
\item
  a required evidence status is no longer \texttt{VALID};
\item
  decision-relevant context changes;
\item
  intent binding no longer holds;
\item
  the validity interval expires; or
\item
  the supplied derivation no longer verifies against the original bundle.
\end{itemize}

Invalidation means that the earlier result cannot be reused. It does not convert the earlier \texttt{ALLOW} into a current \texttt{DENY}; a current decision requires re-adjudication.

Let the redemption-time input be

\[
R=\langle K_r,P_r,x_r,Ev_r,Ctx_r,t_r\rangle.
\]

For the baseline profile, current-condition validity is the following explicit conjunction:

\begin{equation}
\begin{aligned}
&CurrentConditionsHold(erc,B,R)\iff{}\\
&\quad version(K_r)=erc.rootPolicyVersion\\
&{}\land version(P_r)=erc.operationalPolicyVersion\\
&{}\land id(Ev_r)=erc.evidenceId\\
&{}\land id(Ctx_r)=erc.contextId\\
&{}\land Bound(I,x_r,Ctx_r,t_r)\\
&{}\land Permit_{K_r}(I,x_r,Ev_r,Ctx_r,t_r)\\
&{}\land Permit_{P_r}(I,x_r,Ev_r,Ctx_r,t_r)\\
&{}\land \forall q\in Q_K:\ Status(q,Ev_r,Ctx_r,t_r)=VALID\\
&{}\land \forall q\in Q_P:\ Status(q,Ev_r,Ctx_r,t_r)=VALID.
\end{aligned}
\label{eq:current-conditions}
\end{equation}

The obligation identities and resolution versions are already committed by the verified ERC. An extension may add current-state predicates only by naming and versioning them in the profile and committing their inputs in the ERC; undeclared resolver state is inadmissible.

The grant-state transition relation is:

\[
ISSUED\rightarrow
\{CONSUMED,EXPIRED,REVOKED\}.
\]

No transition returns a terminal grant to \texttt{ISSUED}.

A nonce contributes to ERC or grant identity. It does not establish this state transition. At-most-once redemption requires an authoritative grant-state transition or a mechanism with equivalent semantics.

Revocation and expiry are explicit competing transitions:

\[
\begin{aligned}
g:ISSUED&\xrightarrow{Revoke}g:REVOKED,\\
g:ISSUED&\xrightarrow{Expire}g:EXPIRED.
\end{aligned}
\]

All transitions out of \texttt{ISSUED} share one authoritative linearization order. Thus, if \texttt{Revoke} or \texttt{Expire} linearizes before Redemption, \texttt{R-Success} is disabled; if \texttt{R-Success} linearizes first, the later lifecycle operation cannot replace \texttt{CONSUMED}. This supplies a determinate result even when requests overlap in real time~\cite{herlihy-wing1990}.

The Redemption guard is defined, rather than left implementation-specific, by:

\begin{equation}
\begin{aligned}
&RedeemOK(g,erc,B,\pi,R)\iff{}\\
&\quad VerifyERC(erc,B,\pi)=VALID\\
&{}\land erc.decision=ALLOW\\
&{}\land state(g)=ISSUED\\
&{}\land GrantBoundTo(g,erc)\\
&{}\land id(x_r)=erc.candidateId\\
&{}\land Scope(g)\subseteq erc.authorityScope\\
&{}\land t_r\in erc.interval\\
&{}\land CurrentConditionsHold(erc,B,R).
\end{aligned}
\label{eq:redeem-ok}
\end{equation}

A successful protected transition has the form:

\[
\frac{
RedeemOK(g,erc,B,\pi,R)
\quad
state(g)=ISSUED
}{
\langle g:ISSUED,s\rangle
\xrightarrow{Redeem(g,x_r)}
\langle g:CONSUMED,\Delta(s,x_r)\rangle
}.
\tag{R-Success}
\]

The validation predicates, grant consumption, and protected effect must share one logical linearization point with respect to decision-relevant state. A conforming realization may use an atomic transaction, version-conditional commit, transactional state transition, or equivalent mechanism.

A validation step followed by an interleavable state change and later effect is not a conforming realization of \texttt{R-Success}.

If validation fails, if the grant is not \texttt{ISSUED}, or if linearization cannot be established:

\[
Redeem(g,\ldots)=DENY.
\]

\subsection{Conditional Security Propositions}\label{conditional-security-properties}

The following propositions are preservation results over the EBL-Core rules. Several are direct consequences of explicit guards; this is intentional because the profile is meant to make those guards testable. Each result states the assumptions on which it depends and the mutation that would falsify the conclusion if the corresponding guard were omitted.

\begin{proposition}[Exact action binding]\label{property-1-exact-action-binding}
Assume collision resistance for \(Commit_v\), a verified ERC, and a successful application of \texttt{R-Success} for candidate \(x_r\). Then
\[
id(x_r)=erc.candidateId.
\]
Consequently, the declared Redemption Interface cannot use a grant issued for \(x\) to redeem a decision-relevantly different candidate \(x'\).
\end{proposition}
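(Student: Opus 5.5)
The argument unfolds the rule \texttt{R-Success} and reads off the guard that controls the candidate. A successful application of \texttt{R-Success} for \(x_r\) requires its premise \(RedeemOK(g,erc,B,\pi,R)\). By the definition in Equation~\eqref{eq:redeem-ok}, this guard is a conjunction, and one of its conjuncts is \(id(x_r)=erc.candidateId\). Conjunction elimination then gives the first claim directly. The only role of the hypothesis ``verified ERC'' at this stage is to fix \(erc.candidateId\) as the value committed at ERC Generation. That conjunct \(VerifyERC(erc,B,\pi)=VALID\) is also part of the guard, so the field has not been altered after adjudication. In particular, \(erc.candidateId=id(x)\) for the adjudicated candidate \(x\) of \(B\), by the action-binding definition.

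For the consequence, I will take a grant \(g\) with \(GrantBoundTo(g,erc)\) issued for \(x\), so \(erc.candidateId=id(x)\), and a candidate \(x'\) that differs from \(x\) in a decision-relevant way, i.e.\ \(x'\not\equiv_v x\). I will argue by contradiction. Suppose \texttt{R-Success} fires with \(x_r=x'\). By the first part, \(id(x')=id(x)\), which means
\[
Commit_v(Canon_v(x'))=Commit_v(Canon_v(x)).
\]
Since \(x'\not\equiv_v x\), this pair is a binding collision. Under the computational binding requirement from the canonical-identity subsection, any adversary (the Agent in the threat model) that produces such an \(x'\) succeeds only with negligible probability. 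Hence, except with negligible probability, the guard fails. Because \(Redeem\) returns \texttt{DENY} whenever validation fails, the result for \(x'\) is \texttt{DENY}.

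The main obstacle is stating the consequence precisely. It is computational rather than absolute: it holds only up to commitment collisions, and only for decision-relevant differences. Candidates that are canonically equivalent share an identity by canonical determinism and are deliberately treated as the same candidate. I also need the grant to be tied to this particular ERC through \(GrantBoundTo\), so that the \(candidateId\) being checked really belongs to the grant's own ERC and not to some other ERC presented alongside it. Both points come from conjuncts of the guard, so I will name them explicitly. Finally, I will note the falsifying mutation: if the conjunct \(id(x_r)=erc.candidateId\) is dropped from \(RedeemOK\), the conclusion fails.
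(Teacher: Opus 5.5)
Your proposal is correct and takes essentially the same route as the paper's proof sketch: you read the first claim off as a conjunct of the \(RedeemOK\) premise of \texttt{R-Success}, and you obtain the consequence because a decision-relevant change to the candidate changes its canonical form, and hence its commitment, unless a collision occurs. Your explicit use of \(GrantBoundTo\) and the verified ERC to fix \(erc.candidateId=id(x)\) only spells out details that the paper's sketch leaves implicit.
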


\begin{proof}[Proof sketch]
Equation~\eqref{eq:redeem-ok} is a premise of \texttt{R-Success} and contains the equality \(id(x_r)=erc.candidateId\). A mutation of any canonical decision-relevant candidate field changes \(Canon_v(x_r)\) and hence its commitment, except with a collision excluded by the assumption. Removing the candidate-identity guard admits the recipient-substitution counterexample used in the conformance corpus. The result is scoped to the declared interface and says nothing about alternative effect-producing paths.
\end{proof}

\begin{proposition}[Root-policy dominance]\label{property-2-root-policy-dominance}
Fix a root-policy version \(K\). If Operational Policy cannot modify \(Permit_K\), \(Q_K\), or root-obligation resolution semantics, and if direct predicates and obligations compose conjunctively, then
\[
Allow_{\eta}(K,P)
\subseteq
Allow_{\eta}(K,\varnothing).
\]
\end{proposition}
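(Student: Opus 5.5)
The plan is to prove the inclusion pointwise. Fix \(\eta=\langle I,Ev,Ctx,t\rangle\) and an arbitrary \(x\in Allow_{\eta}(K,P)\), and show \(x\in Allow_{\eta}(K,\varnothing)\). Unfolding the set-builder definition, membership in \(Allow_{\eta}(K,P)\) gives four conjuncts:
\begin{itemize}
\item \(WellFormed(B)\);
\item \(Bound(I,x,Ctx,t)\);
\item \(RootOK_K(I,x,Ev,Ctx,t)\);
\item \(OperationalOK_P(I,x,Ev,Ctx,t)\).
\end{itemize}
Membership in \(Allow_{\eta}(K,\varnothing)\) needs the same first three conjuncts together with \(OperationalOK_{\varnothing}\). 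The last of these is \(true\) by the stipulated definition of the empty operational policy (\(Permit_\varnothing=true\), \(Q_\varnothing=\varnothing\)). So the whole argument reduces to showing that the first three conjuncts carry over unchanged when \(P\) is replaced by \(\varnothing\).

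That transfer is where the hypotheses are used, and I expect it to be the main obstacle, since it is the only non-syntactic step. I would argue it conjunct by conjunct.

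\textbf{Binding.} \(Bound(I,x,Ctx,t)\) does not mention \(P\) at all, so it is literally the same proposition in both sets.

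\textbf{Root conjunct.} \(RootOK_K\) is \(Permit_K\land EvidenceOK_K\), and \(EvidenceOK_K\) quantifies over \(Q_K=Obligations_K(K,I,x,Ctx)\) using \(Status\) under the root-declared resolution semantics. The hypothesis that Operational Policy cannot modify \(Permit_K\), \(Q_K\), or root-obligation resolution semantics says exactly that each of these is a function of \(K\) (and \(\eta,x\)) alone. Hence \(RootOK_K\) evaluates identically whether the accompanying operational policy is \(P\) or \(\varnothing\). I would make this explicit as a small lemma: the root conjunct is invariant under the choice of operational policy.

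\textbf{Well-formedness.} This is the delicate point, because the bundle \(B\) contains \(P\). One needs \(WellFormed(B)\) with \(P\) to imply well-formedness of the bundle with \(\varnothing\). I would discharge this by taking the empty policy to be a well-formed operational-policy object by convention, and by treating \(WellFormed\) as a componentwise check, so that replacing one well-formed component with another preserves it. If the profile does not guarantee this, I would state it as an additional explicit side assumption rather than hide it.

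Conjunctive composition is what guarantees that \(OperationalOK_P\) enters only as an extra conjunct. It therefore can only remove candidates and never restores or alters the root conjunct, so dropping it yields a superset. Finally, I would note the falsifying mutation the paper asks for: if \(P\) could rewrite \(Q_K\), for example by deleting a root obligation, then the root-invariance lemma fails. A candidate whose root evidence is \texttt{MISSING} would then lie in \(Allow_\eta(K,P)\) but not in \(Allow_\eta(K,\varnothing)\), which shows that the hypothesis is necessary.
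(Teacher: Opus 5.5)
Your argument is correct and follows the paper's proof. Membership in \(Allow_{\eta}(K,P)\) requires \(RootOK_K\) and \(OperationalOK_P\); the empty policy turns the latter into \(true\), and the non-weakening hypothesis leaves \(RootOK_K\) (and trivially \(Bound\)) unchanged. Your extra care with \(WellFormed(B)\), which the paper's sketch skips even though \(B\) contains \(P\), is a legitimate refinement, and making it an explicit side assumption when the profile does not guarantee it is the right choice.
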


\begin{proof}[Proof sketch]
Membership in \(Allow_{\eta}(K,P)\) requires both \(RootOK_K\) and \(OperationalOK_P\). The empty Operational Policy replaces only the latter conjunct with \(true\) and removes only \(Q_P\); it leaves \(RootOK_K\) unchanged. Every member of the first set is therefore a member of the second. Permitting \(P\) to delete or reinterpret a root obligation would invalidate the argument, which is why the non-weakening premise is explicit. The proposition does not establish that \(K\) is substantively correct or prevent an authorized replacement of \(K\).
\end{proof}

\begin{proposition}[Evidence-obligation safety]\label{property-3-evidence-obligation-safety}
If
\[
\Gamma_K(P,I,x,Ev,Ctx,t)=(ALLOW,r,\pi),
\]
then
\[
\forall q\in Q_K\cup Q_P:\ Status(q,Ev,Ctx,t)=VALID.
\]
\end{proposition}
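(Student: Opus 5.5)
The argument proceeds by case analysis on the two adjudication rules, \texttt{E-Allow} and \texttt{E-Deny}. These are the only ways $\Gamma_K$ can produce an output. First I would observe that, for a fixed bundle $B$, exactly one of $CoreOK_K(P,I,x,Ev,Ctx,t)$ and $\neg CoreOK_K(P,I,x,Ev,Ctx,t)$ holds. So exactly one of the two rules applies. The conclusion of \texttt{E-Deny} has decision component $DENY$, and $DENY\neq ALLOW$ in the decision domain $\{ALLOW,DENY\}$. Hence a result of the form $(ALLOW,r,\pi)$ cannot have come from \texttt{E-Deny}. It must therefore be an instance of \texttt{E-Allow}, whose premise is $CoreOK_K(P,I,x,Ev,Ctx,t)$. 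Since \texttt{E-Allow} fixes $r=OK$, this also gives $r=OK$ as a by-product.

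Second, I would unfold the definition of $CoreOK_K$. It contains the conjuncts $\forall q\in Q_K:Status(q,Ev,Ctx,t)=VALID$ and $\forall q\in Q_P:Status(q,Ev,Ctx,t)=VALID$. Any $q\in Q_K\cup Q_P$ lies in $Q_K$ or in $Q_P$, and in either case the corresponding conjunct gives $Status(q,Ev,Ctx,t)=VALID$. Here $Q_K=Obligations_K(K,I,x,Ctx)$ and $Q_P=Obligations_P(P,I,x,Ctx)$ are the sets computed from the same bundle, so they are the same sets that appear in the statement.

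The main obstacle is the first step. It requires that $\Gamma_K$ behaves as a function defined exactly by these two rules, with no third path to $ALLOW$. The paper supports this through input closure and determinism (Section~\ref{side-effect-freedom-input-closure-and-bounded-evaluation} and the Determinism requirement), and through the fact that the rules are stated as the complete definition. I would make this assumption explicit, namely that the result is relative to a conforming adjudicator. I would also note one edge case. If $\neg WellFormed(B)$ holds, the obligation sets or statuses may be structurally undefined. In that case $CoreOK_K$ fails and \texttt{E-Deny} returns \texttt{INPUT\_INVALID}, so this case never reaches $ALLOW$ and does not threaten the claim.

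Finally, in the style of the earlier sketches, I would state the falsifying mutation: an adjudicator that treats $UNKNOWN$ or $EXPIRED$ as discharging, or that evaluates only $Q_P$, would violate the conclusion. I would also stress that the result concerns the declared classification $Status$, not external truth.
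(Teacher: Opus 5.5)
Your proof is correct and follows the paper's sketch: an \texttt{ALLOW} output can only arise from \texttt{E-Allow}, whose premise $CoreOK_K$ contains the two conjuncts requiring \texttt{VALID} for every member of $Q_K$ and of $Q_P$. Your explicit treatment of rule exhaustiveness is more careful than the paper's ``sole positive rule'' remark, but determinism and input closure are not needed there---it suffices that $\Gamma_K$ is defined by \texttt{E-Allow} and \texttt{E-Deny} alone.
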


\begin{proof}[Proof sketch]
The sole positive rule, \texttt{E-Allow}, requires \(CoreOK_K\). Its final two conjuncts require \texttt{VALID} for every member of \(Q_K\) and \(Q_P\). Replacing any required status by \texttt{UNKNOWN}, \texttt{MISSING}, \texttt{EXPIRED}, or \texttt{CONFLICT} falsifies \(CoreOK_K\) and selects \texttt{E-Deny} under the precedence in Equation~\eqref{eq:failure-precedence}. The result concerns obligation resolution, not the external truth of an evidence assertion.
\end{proof}

\begin{proposition}[Semantic replay consistency]\label{property-4-semantic-replay-consistency}
Assume input closure, deterministic canonicalization, deterministic evidence classification, and the fixed failure precedence in Equation~\eqref{eq:failure-precedence}. For closed bundles under the same profile,
\[
Canon_v(B_1)=Canon_v(B_2)
\implies
d_1=d_2\land r_1=r_2.
\]
Each accepted derivation must independently satisfy
\[
VerifyDerivation(\pi_i,B_i,d_i,r_i)=true.
\]
Serialized derivation equality is not required.
\end{proposition}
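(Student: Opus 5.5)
The plan is to show that \(d\) and \(r\) are functions of \(Canon_v(B)\) alone. The decision \(d\) is fixed by the truth value of \(CoreOK_K\) through the two exhaustive and mutually exclusive rules \texttt{E-Allow} and \texttt{E-Deny}. The reason \(r\) is fixed by \(FirstFailure(B)\), or equals \texttt{OK} in the allow case. So it suffices to prove that every predicate appearing in \(CoreOK_K\) and in the ordered sequence \(\mathcal{F}(B)\) of Equation~\eqref{eq:failure-precedence} takes the same value on \(B_1\) and \(B_2\).

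\textbf{Step 1: reduce to canonical form.} By input closure, the semantic result depends only on \(B\). No undeclared clock, network, mutable service, randomness, or model inference is consulted. By canonical determinism and content-bound policy versions, \(Canon_v(B_1)=Canon_v(B_2)\) means the two bundles agree on every decision-relevant field of \(v,K,P,I,x,Ev,Ctx,t\). This includes the profile version, which commits the failure precedence and the evidence-resolution semantics. Each predicate \(WellFormed\), \(Bound\), \(Permit_K\), \(Permit_P\) is a deterministic function of those fields, so each takes equal values on the two bundles.

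\textbf{Step 2: obligations and evidence.} Next I will argue that \(Q_K=Obligations_K(K,I,x,Ctx)\) and \(Q_P=Obligations_P(P,I,x,Ctx)\) coincide, since they are functions of canonically equal arguments. Deterministic evidence classification, with mutually exclusive rules or a total precedence, then gives equal \(Status(q,Ev,Ctx,t)\) for each \(q\). Hence \(EvidenceOK_K\) and \(EvidenceOK_P\) agree on the two bundles.

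\textbf{Step 3: decision and reason.} It follows that \(CoreOK_K\) agrees, so the same rule fires and \(d_1=d_2\). In the deny case, the failure sequence is a fixed ordered list of equal-valued predicates. The least index whose predicate holds is therefore the same for both bundles. This must respect the short-circuit convention: predicates after the first structurally undefined one are not evaluated. That convention is itself a deterministic function of \(B\), so \(r_1=r_2\).

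\textbf{Step 4: derivations.} The verification clause imposes nothing new: \texttt{E-Allow} and \texttt{E-Deny} each carry \(VerifyDerivation(\pi_i,B_i,d_i,r_i)=true\) as a side condition. Because the verification conditions constrain only the terminal judgment and admissible steps, not the serialization, distinct encodings \(\pi_1\neq\pi_2\) are admissible. So \(id(\pi_1)=id(\pi_2)\) is not implied.

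\textbf{Main obstacle.} The hard part is Step 1. I must justify that canonical equality captures every field any predicate reads, including the projected context and the version-identified resolution semantics. I will handle this by appealing to the profile requirements that the candidate profile identify every decision-relevant field and that policy identifiers be content-bound. I will also note that if a predicate read a non-canonical field, input closure would be violated. That gives the falsifying mutation for the proposition.
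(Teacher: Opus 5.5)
Your proposal is correct and follows the paper's own proof sketch. Canonical equality fixes every input to the predicates and status functions, and input closure excludes hidden clocks, resolvers, and randomness; so the same predicates succeed or fail, the fixed precedence in Equation~\eqref{eq:failure-precedence} selects the same primary reason, and derivations are compared only semantically, with your extra steps (agreement of $Q_K$ and $Q_P$, exhaustiveness of \texttt{E-Allow}/\texttt{E-Deny}, the short-circuit convention) simply making explicit what the paper compresses.
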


\begin{proof}[Proof sketch]
Canonical equality gives identical values for every declared input to the predicates and status functions. Input closure excludes an undeclared clock, resolver, randomness source, or mutable process value. The same predicates therefore fail or succeed, and the fixed total precedence selects the same primary reason. Different encodings may record different valid inference structures, so the conclusion is semantic rather than byte-level equality. A hidden clock or implementation-dependent iteration order provides a counterexample if either assumption is removed.
\end{proof}

\begin{proposition}[Single-use grant consumption]\label{property-5-single-use-grant-consumption}
Assume one authoritative, linearizable grant-state object and terminal states \texttt{CONSUMED}, \texttt{EXPIRED}, and \texttt{REVOKED}. At most one concurrent Redemption of the same grant can apply \texttt{R-Success} and produce the protected effect.
\end{proposition}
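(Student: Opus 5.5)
The argument is a linearizability one: consider any execution in which several Redemption operations on the same grant \(g\) overlap in real time, and reason about the single authoritative linearization order that the grant-state object guarantees. By assumption, every operation that reads or changes \(state(g)\) is linearizable, and \texttt{R-Success} requires validation, the \texttt{ISSUED}-to-\texttt{CONSUMED} transition, and the protected effect to share one logical linearization point. Each successful Redemption can therefore be treated as a single atomic step in that total order. The concurrent execution is thus equivalent to some sequential history of atomic Redemption, Revoke and Expire steps on \(g\), and it suffices to prove the claim for such sequential histories.

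On a sequential history, I would argue by contradiction. Suppose two Redemptions \(\rho_1\) and \(\rho_2\) both apply \texttt{R-Success}, and let \(\rho_1\) precede \(\rho_2\) in the linearization order. \texttt{R-Success} has \(state(g)=ISSUED\) as an explicit premise, both directly and as a conjunct of \(RedeemOK\) in Equation~\eqref{eq:redeem-ok}. Its conclusion sets \(state(g)=CONSUMED\).

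Next I would establish a small invariant by induction over the steps after \(\rho_1\). The grant-state transition relation only allows \(ISSUED\rightarrow\{CONSUMED,EXPIRED,REVOKED\}\), and no transition returns a terminal state to \texttt{ISSUED}. Hence once \(state(g)\) is terminal, it stays terminal. In particular, \(state(g)\neq ISSUED\) at the linearization point of \(\rho_2\). The premise of \texttt{R-Success} therefore fails for \(\rho_2\), and the rule ``if the grant is not \texttt{ISSUED} \ldots\ \(Redeem=DENY\)'' applies, which contradicts the assumption. The same invariant covers the case where a Revoke or Expire linearizes first: then no Redemption succeeds at all, which is consistent with the ``at most one'' bound.

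To conclude that at most one \emph{protected effect} occurs, I need the effect \(\Delta(s,x_r)\) to be produced only inside an \texttt{R-Success} step. This holds at the declared Redemption Interface, since that is the only rule applying \(\Delta\). I would state explicitly that effects along alternative paths are outside the claim, matching the scoping of Proposition~\ref{property-1-exact-action-binding}.

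The main obstacle is the reduction from concurrent to sequential histories. The validation check and the state change must be exactly one atomic step; otherwise a check-then-act interleaving would let two Redemptions both observe \texttt{ISSUED}. I would handle this by citing the paper's explicit exclusion of interleavable recheck-then-effect realizations together with the Herlihy--Wing linearizability criterion. I would also note the falsifying mutation for the proof sketch: replacing the authoritative compare-and-set with a non-atomic read followed by a write admits a double redemption.
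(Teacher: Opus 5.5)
Your proposal is correct and follows essentially the same route as the paper's sketch: linearizability totally orders the competing transitions out of \texttt{ISSUED}, the first \texttt{R-Success} moves the grant to \texttt{CONSUMED}, every later Redemption fails the \(state(g)=ISSUED\) premise, and a Revoke or Expire that linearizes first blocks every Redemption. Your terminal-state invariant and your requirement that the \texttt{ISSUED} check and the consumption form one atomic step only make explicit what the paper's sketch leaves implicit; the one real difference is the counterexample, where the paper uses a nonce without a shared authoritative transition and you use a non-atomic read-then-write.
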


\begin{proof}[Proof sketch]
Linearizability totally orders competing state transitions from \texttt{ISSUED}~\cite{herlihy-wing1990}. The first successful Redemption changes the state to \texttt{CONSUMED}; every other Redemption is ordered after that transition and fails the \(state(g)=ISSUED\) premise. If revocation or expiry is first, no Redemption succeeds. A nonce alone would not prove the result because two consumers could accept the same nonce without a shared authoritative transition.
\end{proof}

\begin{proposition}[Conditional interface enforcement]\label{property-6-conditional-interface-enforcement}
Let \(ProtectedExecute(x)\) denote a protected transition through the declared interface. Assume: every such transition requires successful Redemption; the Effector applies only the redeemed candidate; Grant Issuance requires a verified \texttt{ALLOW} ERC; \texttt{R-Success} checks Equation~\eqref{eq:redeem-ok}; validation, consumption, and effect are linearized; grants cannot be forged or broadened within the model; and participating roles satisfy their declared trust assumptions. Then
\[
\begin{aligned}
ProtectedExecute(x)\implies{}&\exists B,\pi,erc,g:\\
&erc.decision=ALLOW,\\
&id(x)=erc.candidateId,\\
&VerifyDerivation(\pi,B,ALLOW,\\[-0.2em]
&\qquad erc.reason)=true,\\
&state(g):ISSUED\rightarrow CONSUMED.
\end{aligned}
\]
\end{proposition}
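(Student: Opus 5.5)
The argument runs backward along the lifecycle, from the protected transition to the adjudication that must have licensed it. Each step applies one explicit assumption of the proposition, so the proof is a chain of guard extractions rather than a new semantic argument.

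First, take an occurrence of \(ProtectedExecute(x)\) through the declared interface. The first assumption (every such transition requires successful Redemption) gives a grant \(g\), an ERC \(erc\), a bundle \(B\), a derivation \(\pi\) and redemption inputs \(R\) with \(x_r\) such that \texttt{R-Success} was applied. The Effector-faithfulness assumption identifies the executed candidate with the redeemed one, so the effect is \(\Delta(s,x_r)\) with \(x\equiv x_r\), hence \(id(x)=id(x_r)\) by canonical determinism. The conclusion of \texttt{R-Success} directly supplies \(state(g):ISSUED\rightarrow CONSUMED\). Linearization of validation, consumption and effect ensures that the guard in Equation~\eqref{eq:redeem-ok} held at the same linearization point as the effect, and not at some earlier, stale moment.

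Second, unpack the premise \(RedeemOK(g,erc,B,\pi,R)\). It contains \(erc.decision=ALLOW\) and \(id(x_r)=erc.candidateId\); Proposition~\ref{property-1-exact-action-binding} already records the latter, and combining it with step one gives \(id(x)=erc.candidateId\). The guard also contains \(VerifyERC(erc,B,\pi)=VALID\) and \(GrantBoundTo(g,erc)\). The non-forgeability/non-broadening assumption rules out \(g\) having arisen other than through Grant Issuance. Grant Issuance in turn requires a verified \texttt{ALLOW} ERC, which corroborates that the ERC bound to \(g\) is this verified \(erc\).

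Third, and this is the step I expect to be the main obstacle, extract \(VerifyDerivation(\pi,B,ALLOW,erc.reason)=true\) from \(VerifyERC(erc,B,\pi)=VALID\). The paper does not spell out \(VerifyERC\) as an explicit conjunction. I would fix it as the conjunction of:
\begin{itemize}
\item \(id(\pi)=erc.derivationId\);
\item consistency of the ERC commitments with \(B\), including \(F(B)\);
\item \(EligibleForERC(B,erc.decision,erc.reason,\pi)\), which by definition is \(VerifyDerivation(\pi,B,erc.decision,erc.reason)=true\).
\end{itemize}
I would justify this reading from the ERC-generation requirement that an ERC is produced only for eligible results, and from the derivation-binding discussion. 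Substituting \(erc.decision=ALLOW\) then yields the required conjunct.

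Taken together, these steps produce the existential witnesses \(B,\pi,erc,g\). I would close by noting which assumption each conjunct depended on, namely mediation, Effector faithfulness, unforgeability and linearization. Dropping any one of them gives the corresponding counterexample: a bypass path, a substituted effect, a forged grant, or a check-then-act race. This keeps the result scoped to the declared interface, as in Proposition~\ref{property-1-exact-action-binding}.
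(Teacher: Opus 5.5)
Your proof is correct and follows the paper's route: complete mediation gives a successful Redemption, \texttt{R-Success} is the only success rule, and unpacking its \(RedeemOK\) premise (Equation~\eqref{eq:redeem-ok}) yields the \texttt{ALLOW} decision, candidate equality, verified derivation, and the \texttt{ISSUED}-to-\texttt{CONSUMED} transition, with Effector fidelity identifying \(x\) with \(x_r\). The step you flagged as the main obstacle needs no reconstruction, because the paper's ERC Verification requirements already state that \(VerifyERC(erc,B,\pi)=VALID\) holds only if the supplied derivation verifies for \(B\), \(erc.decision\), and \(erc.reason\).
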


\begin{proof}[Proof sketch]
By the complete-mediation assumption for the declared interface, \(ProtectedExecute(x)\) has a successful Redemption witness. The only successful rule is \texttt{R-Success}; expanding its \(RedeemOK\) premise with Equation~\eqref{eq:redeem-ok} yields the ERC decision, candidate equality, verified derivation, and state transition in the conclusion. Removing complete mediation admits an alternative-path counterexample, while removing Effector fidelity permits execution of a candidate other than the committed one. Hence the proposition is interface-local and conditional; it does not establish deployment-wide non-bypassability, human-intent correctness, evidence truth, root-policy correctness, or correct external outcomes.
\end{proof}

\section{EBL-Core Conformance Profile and Integration Model}\label{ebl-core-conformance-profile-and-integration-model}

\subsection{Conformance Model Overview}\label{conformance-model-overview}

An EBL-Core implementation assigns the following logical roles:

\begin{enumerate}
\def\labelenumi{\arabic{enumi}.}
\tightlist
\item
  Intent Provider;
\item
  Candidate Materializer;
\item
  Evidence Resolver;
\item
  Adjudicator;
\item
  ERC Generator;
\item
  ERC Verifier;
\item
  Grant Issuer; and
\item
  Effector and Redemption Interface.
\end{enumerate}

The principal path is:

\begin{verbatim}
Candidate Materialization
        |
Adjudication
        |
ERC Generation and Verification
        |
Grant Issuance: state = ISSUED
        |
Linearized Redemption and Consumption
        |
Protected Effect or Denial
\end{verbatim}

The Candidate Materializer produces one canonical action containing every field required by the applicable candidate profile. A schema cannot guarantee coverage of unknown real-world effects; candidate-profile completeness and Effector fidelity remain explicit assumptions.

The Adjudicator produces a deterministic semantic decision and reason together with a verifiable decision derivation. Different implementations may encode valid derivations differently.

The ERC Generator produces a decision-binding release-condition object. ERC generation does not release execution authority.

The ERC Verifier checks the ERC against the original adjudication bundle and supplied derivation. Current-state redemption compatibility is evaluated separately by the Redemption Interface.

The Grant Issuer releases authority for the exact candidate and initializes the grant in the \texttt{ISSUED} state.

The Redemption Interface verifies current conditions and performs a linearized transition that consumes the grant and releases the protected effect. Whether all effect-producing paths traverse this interface remains a deployment property.

\subsection{EBL-Core Input Contract}\label{ebl-core-input-contract}

The Adjudicator consumes:

\[
B=
\langle
profileVersion,K,P,I,x,Ev,Ctx,t
\rangle.
\]

A conforming candidate \(x\) denotes exactly one action. It must be fully materialized, schema-valid, deterministically canonicalizable, and complete with respect to all decision-relevant parameters.

The policy inputs separately identify:

\[
Q_K=Obligations_K(K,I,x,Ctx)
\]

and:

\[
Q_P=Obligations_P(P,I,x,Ctx).
\]

The adapter must preserve the origin and semantics of each obligation. Operational-policy processing cannot suppress or reinterpret root-policy obligations.

The evidence set \(Ev\) must contain the complete materialized evidence input used to classify \(Q_K\cup Q_P\). Undeclared resolver state cannot affect adjudication.

The context \(Ctx\) must contain the complete decision-relevant contextual projection. Time \(t\) remains an explicit input.

Input closure requires the result to depend only on \(B\) and the referenced profile semantics. External observations may occur before adjudication, but their decision-relevant results must be materialized in \(Ev\) or \(Ctx\).

\subsection{Adjudicator Conformance Requirements}\label{adjudicator-conformance-requirements}

\subsubsection{Deterministic Semantic Evaluation}\label{deterministic-semantic-evaluation}

For canonically equivalent bundles:

\[
Canon_v(B_1)=Canon_v(B_2),
\]

conforming evaluators must produce:

\[
d_1=d_2
\qquad\text{and}\qquad
r_1=r_2.
\]

Each derivation must satisfy:

\[
VerifyDerivation(\pi_i,B_i,d_i,r_i)=true.
\]

Conformance does not require:

\[
id(\pi_1)=id(\pi_2).
\]

The profile must nevertheless define deterministic semantics for rule resolution, failure priority, evidence-status classification, and unordered inputs. An implementation-specific iteration order cannot alter \(d\) or \(r\).

\subsubsection{Side-Effect-Free and Bounded Evaluation}\label{side-effect-free-and-bounded-evaluation}

Adjudication must not modify external state, policies, evidence sources, grant state, redemption state, or the protected system.

Each profile defines:

\[
Steps(\Gamma,B)\leq Bound_{Profile}(|B|).
\]

The present specification defines this obligation without claiming that a particular implementation has already been measured or verified against it.

\subsubsection{Input Validation and Failure Classification}\label{input-validation-and-failure-classification}

Before positive adjudication, the Adjudicator validates:

\begin{itemize}
\tightlist
\item
  profile and schema versions;
\item
  canonical representability;
\item
  intent and candidate structure;
\item
  exact candidate completeness;
\item
  policy identities;
\item
  separate root and operational obligation sets; and
\item
  evidence-obligation structure.
\end{itemize}

Malformed, incomplete, unsupported, or unresolved evaluation cannot produce \texttt{ALLOW}.

\subsubsection{No Implicit Repair}\label{no-implicit-repair}

Obtaining new evidence, modifying a candidate, replacing a policy, resolving a conflict, or filling a missing parameter creates a new input bundle and requires a new adjudication.

\subsection{ERC Generation, Grant Issuance, and Redemption}\label{erc-generation-grant-issuance-and-redemption}

Let:

\[
R=
\langle
K_r,P_r,x_r,Ev_r,Ctx_r,t_r
\rangle
\]

denote the current redemption inputs.

The abstract interfaces are:

\[
Adjudicate(B)\rightarrow(d,r,\pi),
\]

\[
GenerateERC(B,d,r,\pi,issuer,nonce,interval)
\rightarrow erc,
\]

\[
VerifyERC(erc,B,\pi)
\rightarrow
\{VALID,INVALID\},
\]

\[
IssueGrant(erc,B,\pi)
\rightarrow
g\mid\bot,
\]

and:

\[
Redeem(g,erc,B,\pi,R)
\rightarrow
\{EXECUTE,DENY\}.
\]

\subsubsection{ERC Generation}\label{erc-generation}

The ERC Generator must bind:

\begin{itemize}
\tightlist
\item
  the complete input-bundle commitment;
\item
  intent and exact-candidate identities;
\item
  actor and authority scope;
\item
  root- and operational-policy versions;
\item
  \(Q_K\) and \(Q_P\);
\item
  evidence and context commitments;
\item
  adjudication time and validity interval;
\item
  decision and reason;
\item
  the supplied derivation commitment;
\item
  profile and schema versions;
\item
  issuer identity; and
\item
  a nonce.
\end{itemize}

For an \texttt{ALLOW} ERC:

\[
VerifyDerivation(\pi,B,ALLOW,r)=true.
\]

A \texttt{DENY} ERC may preserve a refusal for verification or diagnosis, but:

\[
erc.decision=DENY
\implies
IssueGrant(erc,B,\pi)=\bot.
\]

The nonce distinguishes the ERC instance. It does not implement grant consumption.

\subsubsection{ERC Verification}\label{erc-verification}

\(VerifyERC(erc,B,\pi)=VALID\) only if:

\begin{itemize}
\tightlist
\item
  the ERC is well formed;
\item
  profile and schema versions are recognized;
\item
  all object commitments match \(B\);
\item
  \(Q_K\) and \(Q_P\) are correctly generated and separately identified;
\item
  the candidate denotes exactly one complete action;
\item
  the issuer is authorized for the recorded scope;
\item
  the supplied derivation commitment matches the ERC's \texttt{derivationId} field; and
\item
  the supplied derivation verifies for \(B\), \(erc.decision\), and \(erc.reason\).
\end{itemize}

ERC verification establishes consistency with the original adjudication. It does not establish that current redemption conditions remain valid.

\subsubsection{Grant Issuance}\label{grant-issuance}

A grant may be issued only if:

\[
erc.decision=ALLOW
\land
VerifyERC(erc,B,\pi)=VALID.
\]

Issuance creates \(g\) such that:

\[
GrantBoundTo(g,erc)=true,
\]

\[
CandidateOf(g)=erc.candidateId,
\]

\[
Scope(g)\subseteq erc.authorityScope,
\]

and:

\[
state(g)=ISSUED.
\]

The representation of \(g\) is an implementation choice. An ERC may be embedded in the same credential that represents \(g\), but the contract and authority-bearing roles remain semantically distinct.

\subsubsection{Redemption Verification}\label{redemption-verification}

Before the protected effect, the Redemption Interface verifies:

\[
\begin{aligned}
&VerifyERC(erc,B,\pi)=VALID,\\
&GrantBoundTo(g,erc),\\
&state(g)=ISSUED,\\
&id(x_r)=erc.candidateId,\\
&version(K_r)=erc.rootPolicyVersion,\\
&version(P_r)=erc.operationalPolicyVersion,\\
&t_r\in erc.validityInterval,\\
&CurrentConditionsHold(erc,B,R).
\end{aligned}
\]

For the baseline profile, \texttt{CurrentConditionsHold} expands to the conjunction in Equation~\eqref{eq:current-conditions}: exact policy-version equality, equality of committed evidence and context identities, continued intent binding, both policy predicates, and \texttt{VALID} status for every committed obligation. An adapter cannot place an undeclared network lookup, clock, resolver, or policy decision inside this predicate. Extension conditions must be named and versioned by the profile and bound by the ERC.

If any condition is false or unresolved, redemption returns \texttt{DENY}.

For a successful redemption, validation, the transition

\[
ISSUED\rightarrow CONSUMED,
\]

and the protected effect must be linearized. A conforming implementation may realize this requirement through an atomic transaction, version-conditional commit, transactional state transition, or an equivalent mechanism. EBL-Core does not prescribe which mechanism is used.

A grant in \texttt{CONSUMED}, \texttt{EXPIRED}, or \texttt{REVOKED} cannot be redeemed. The nonce assists with identity and correlation, but the authoritative grant-state transition supplies replay prevention.

Redemption, revocation, and expiry must operate on the same authoritative lifecycle state. When they overlap, the first transition linearized from \texttt{ISSUED} determines the terminal state: revocation or expiry first forces Redemption to deny; Redemption first produces \texttt{CONSUMED}, after which revocation or expiry cannot overwrite the result.

The EBL-Core baseline is single-candidate and single-use. Bounded candidate sets and multi-use grants require an extension profile.

\subsection{Worked Example: A Single-Use Transfer Grant}\label{worked-transfer-example}

Consider an authenticated treasury request to transfer at most 10,000 USDC from account \texttt{treasury-01} to the pre-approved recipient \texttt{beneficiary-alice} on \texttt{mainnet}. The Intent Authority emits a structured intent \(I_T\) valid over \([1700000000,1700001000]\). The Candidate Materializer resolves the request to exactly one candidate \(x_T\): transfer 7,500 USDC to that recipient, with a maximum fee of 25 units and authority scope \texttt{transfer:treasury-01}. No recipient, amount, network, or fee remains unresolved.

\begin{table*}[t]
\centering
\small
\setlength{\tabcolsep}{4pt}
\renewcommand{\arraystretch}{1.12}
\begin{tabular}{@{}>{\raggedright\arraybackslash}p{0.19\textwidth}>{\raggedright\arraybackslash}p{0.34\textwidth}>{\raggedright\arraybackslash}p{0.42\textwidth}@{}}
\toprule
\textbf{Object} & \textbf{Materialized value} & \textbf{Decision-relevant role} \\
\midrule
Intent \(I_T\) & Named actor, source account, asset, recipient set, 10,000-unit maximum, network, scope, and validity window & Defines immutable constraints and permitted refinement for one transfer candidate. \\
Candidate \(x_T\) & 7,500 USDC; \texttt{beneficiary-alice}; \texttt{mainnet}; maximum fee 25 & Supplies the exact payload committed by adjudication, ERC Generation, Grant Issuance, and Redemption. \\
Root Policy \(K_3\) & Amount at most 10,000; USDC and mainnet permitted; account enabled and funded & Generates \(q_1=\) recipient allowlisting and \(q_2=\) sufficient-balance obligations. \\
Operational Policy \(P_{17}\) & Amount at most 9,000 & Generates \(q_3=\) approval-quorum obligation without changing \(q_1\) or \(q_2\). \\
Evidence \(Ev_T\) & Allowlist attestation, ledger snapshot, and approval record, each bound to \(x_T\) and current at \(t=1700000100\) & Resolves \(Status(q_i,Ev_T,Ctx_T,t)=VALID\) for \(i\in\{1,2,3\}\). \\
Context \(Ctx_T\) & Account enabled, balance 10,000, state version \texttt{ledger-state-42} & Supplies the explicit state projection used by both policy predicates. \\
\bottomrule
\end{tabular}
\caption{Complete materialized inputs for the financial-transfer trace. Values are illustrative units and identifiers; they do not describe a production payment deployment.}
\label{tab:worked-transfer-inputs}
\end{table*}

The bundle is well formed, intent binding and both policy predicates hold, and every evidence obligation resolves to \texttt{VALID}. Hence no predicate in the EBL-Core failure sequence is true, and adjudication produces

\[
\Gamma_{K_3}(P_{17},I_T,x_T,Ev_T,Ctx_T,t)
=(ALLOW,OK,\pi_T).
\]

The supplied derivation \(\pi_T\) records the input commitments, the separate sets \(Q_K=\{q_1,q_2\}\) and \(Q_P=\{q_3\}\), the three \texttt{VALID} classifications, and the applied \texttt{E-Allow} rule. ERC Generation then creates \(erc_T\) containing the commitments to \(I_T\), \(x_T\), \(K_3\), \(P_{17}\), \(Q_K\), \(Q_P\), \(Ev_T\), \(Ctx_T\), \(t\), \(\pi_T\), the effective interval, issuer, and nonce. Verification of \(erc_T\) does not itself release authority. Separate Grant Issuance creates \(g_T\) in \texttt{ISSUED}, bound to \(erc_T\) and the candidate commitment \(id(x_T)\).

At Redemption, the interface receives the same candidate, policy versions, evidence and context commitments, and a current time inside the effective interval. Equations~\eqref{eq:current-conditions} and~\eqref{eq:redeem-ok} hold. A successful linearized transition applies the in-scope effect once and changes \(g_T\) from \texttt{ISSUED} to \texttt{CONSUMED}.

Three perturbations expose the bindings. First, changing the recipient to \texttt{beneficiary-bob} changes \(id(x_T)\), so the existing ERC and grant fail exact-candidate validation. Second, redeeming after the approval evidence or effective ERC interval expires makes the relevant obligation non-\texttt{VALID} and disables \texttt{R-Success}. Third, if 32 Redemption requests race on \(g_T\), the authoritative state object orders them: at most one can observe and consume \texttt{ISSUED}; the rest observe \texttt{CONSUMED} and deny. Section~\ref{preliminary-reference-artifact} reports executable checks of these cases.

\subsection{Interoperability with Existing Systems}\label{interoperability-with-existing-systems}

\begin{table*}[t]
\centering
\small
\setlength{\tabcolsep}{4pt}
\renewcommand{\arraystretch}{1.12}
\begin{tabular}{@{}>{\raggedright\arraybackslash}p{0.18\textwidth}>{\raggedright\arraybackslash}p{0.38\textwidth}>{\raggedright\arraybackslash}p{0.39\textwidth}@{}}
\toprule
\textbf{Existing mechanism} & \textbf{Possible EBL-Core role} & \textbf{Adapter obligation} \\
\midrule
Cedar, Rego, or XACML & Root or operational policy evaluation & Preserve separate root and operational predicates and obligations; expose exact policies, inputs, and results. \\
Capability or credential system & Execution-grant representation & Bind authority to the exact candidate and ERC, expose grant state, and prevent broader interpretation. \\
Provenance or evidence system & Evidence production & Identify assertion, source, scope, freshness, and the root or operational obligation to which the item applies. \\
Agent runtime & Candidate materialization & Produce one complete canonical candidate with no post-adjudication parameter resolution. \\
Reference monitor or enforcement point & Redemption boundary & Verify current conditions and linearize validation, single-use consumption, and the protected transition. \\
\bottomrule
\end{tabular}
\end{table*}

A capability system may represent \(g\), but EBL-Core does not redefine delegation or credential transport. If the capability cannot bind the candidate identity, grant state, or redemption conditions, an additional conforming verifier is required.

Cross-system replay requires the same semantic decision and reason under equivalent input bundles. Independent implementations may emit different valid derivation encodings, provided that each satisfies:

\[
VerifyDerivation(\pi,B,d,r)=true.
\]

Interoperability therefore concerns semantic equivalence and verification at the ERC interface, not byte-identical derivation output or shared internal code.

\subsection{Conformance Levels}\label{conformance-levels}

\subsubsection{Level 0: Decision-Compatible}\label{level-0-decision-compatible}

A Level 0 implementation provides an \texttt{ALLOW} or \texttt{DENY} decision, stable primary reason, identified profile version, and fail-closed handling of malformed or unavailable evaluation. It does not provide full execution-release conformance.

\subsubsection{Level 1: Action-Bound Adjudication}\label{level-1-action-bound-adjudication}

Level 1 adds:

\begin{itemize}
\tightlist
\item
  one canonical candidate identity;
\item
  a canonical intent identity;
\item
  explicit intent-to-candidate binding;
\item
  separate root and operational obligations;
\item
  committed policy and input versions;
\item
  deterministic semantic decisions and reasons; and
\item
  a decision derivation satisfying the profile verification relation.
\end{itemize}

Level 1 does not establish that execution authority is bound to the decision.

\subsubsection{Level 2: Release-Contract Conformance}\label{level-2-release-contract-conformance}

Level 2 adds:

\begin{itemize}
\tightlist
\item
  generation and verification of a complete ERC;
\item
  exact-candidate grant binding;
\item
  initialization of grant state as \texttt{ISSUED};
\item
  validity and current-condition checks;
\item
  single-use \texttt{ISSUED}-to-\texttt{CONSUMED} semantics;
\item
  rejection of terminal grant states; and
\item
  linearized validation, consumption, and protected transition at the declared interface.
\end{itemize}

Level 2 is the minimum level for claiming EBL-Core execution-release conformance. The claim remains scoped to the declared interface and does not establish the absence of alternative effect-producing paths.

\subsubsection{Level 3: Independently Replayable Boundary}\label{level-3-independently-replayable-boundary}

Level 3 adds:

\begin{itemize}
\tightlist
\item
  an independently specified ERC verifier;
\item
  complete decision-derivation verification or semantic replay;
\item
  a published conformance corpus;
\item
  negative, mutation, policy-composition, and concurrent-redemption tests;
\item
  cross-runtime or cross-implementation evaluation; and
\item
  documented equality of semantic decisions and reasons.
\end{itemize}

Level 3 does not require byte-identical derivation serialization. It also does not establish evidence truth, correct human intent, root-policy correctness, or deployment-wide non-bypassability.

\begin{figure*}[t]
  \centering
  \includegraphics[width=\textwidth]{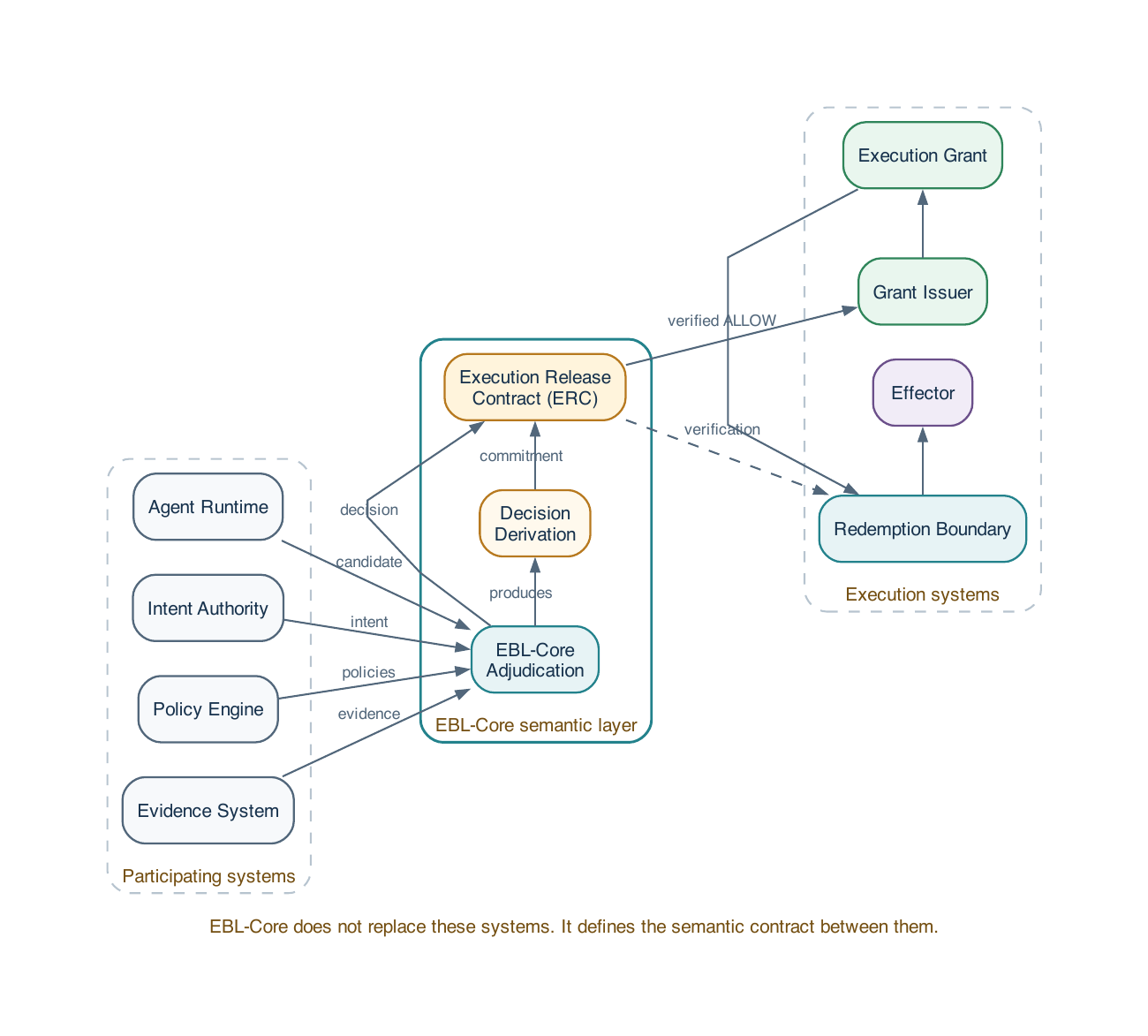}
  \caption{EBL-Core as an interoperability contract. Existing agent, intent, policy, evidence, grant, redemption, and effector components retain their own abstractions. EBL-Core specifies the semantic bindings exchanged among them; it does not replace the participating systems.}
  \label{fig:ebl-interoperability-contract}
\end{figure*}

\section{Evaluation and Validation Strategy}\label{evaluation-and-validation-strategy}

Evaluation of EBL-Core requires separating three distinct questions: whether the profile defines precise and testable semantics, whether implementations realize those semantics correctly and efficiently, and whether a particular deployment satisfies the assumptions under which an execution boundary is meaningful. These questions require different forms of evidence. Conformance tests can validate the semantic profile without establishing deployment security, while implementation benchmarks can characterize operational cost without proving complete mediation.

This preprint reports a bounded executable validation using the accompanying minimal reference artifact. The artifact instantiates one financial-transfer profile, a reference adjudicator, a separately implemented verifier and Semantic Replay path, canonical schemas, mutation vectors, and an in-memory linearizable grant store. Broader cross-domain evaluation, independently developed implementations, production benchmarks, and deployment experience remain future work.

\subsection{Evaluation Goals}\label{evaluation-goals}

The evaluation is organized around four research questions.

\textbf{Q1: Can EBL-Core precisely represent execution-release obligations for high-risk AI actions?}

This question concerns semantic coverage. An evaluation should determine whether EBL-Core can represent the intent, candidate-action, policy, evidence, context, temporal, and redemption constraints required by representative execution scenarios. A scenario should be considered covered only when its decision-relevant obligations are represented through typed profile objects or explicitly defined extension points. Encoding an obligation solely as uninterpreted free text does not establish semantic coverage.

\textbf{Q2: Can independent implementations produce equivalent decisions from the same semantic inputs?}

Given the same profile version and canonical input bundle, conforming adjudicators should produce the same decision and reason code. Their Decision Derivations need not be structurally or byte-identical. Each derivation must independently verify against the same inputs, decision, reason-code precedence, and profile rules. Equality of serialized derivations or derivation commitments is required only when a particular test vector supplies that representation as an explicit input, not as a general consequence of determinism.

\textbf{Q3: Does the Execution Release Contract capture obligations that are not otherwise standardized as one object by existing authorization and agent-control mechanisms?}

This question concerns the residual abstraction addressed by EBL-Core. The evaluation should not ask whether another mechanism is computationally capable of encoding an EBL-Core condition. A sufficiently general policy language can encode many such conditions. Instead, it should determine whether each condition is:

\begin{enumerate}
\def\labelenumi{\arabic{enumi}.}
\tightlist
\item
  native to the mechanism's documented abstraction;
\item
  expressible through an explicit adaptation;
\item
  delegated to an external component; or
\item
  not represented by the evaluated integration.
\end{enumerate}

The relevant result is therefore an obligation mapping, not a ranking of systems.

\textbf{Q4: Can EBL-Core identify execution-boundary and grant-lifecycle failures?}

The evaluation should test candidate, policy, evidence, context, time, derivation, ERC, and grant mutations. It should also test the lifecycle states \texttt{ISSUED}, \texttt{CONSUMED}, \texttt{EXPIRED}, and \texttt{REVOKED}, concurrent Redemption attempts, and the requirement that at most one successful Redemption linearize with the protected effect.

These questions evaluate EBL-Core as a semantic conformance profile. They do not measure whether an AI system is aligned, whether its proposed actions are desirable in general, or whether every path to an external effect is mediated.

\subsection{Preliminary Reference Artifact}\label{preliminary-reference-artifact}

The source package includes an executable artifact written against Python 3.13.4 using only the standard library. It provides two machine-readable schemas, profile-specific deterministic canonicalization and SHA-256 commitments, a reference adjudicator, ERC Generation and Grant Issuance, and an in-memory grant store whose lock is the logical linearization point for the protected test effect. A separate verifier recomputes structural validity, obligation statuses, decision and primary reason, derivation commitments, ERC bindings, and Semantic Replay without calling the adjudicator. The two paths share the declared canonicalization and commitment primitives, so they are not claimed as independently developed implementations. The implementation, schemas, test vectors, retained result, and reproduction instructions are available under \texttt{anc/} in the arXiv source package.

The corpus fixes one financial-transfer bundle and applies positive, negative, boundary, ordering, and adversarial mutations. Table~\ref{tab:preliminary-results} reports the retained run included with the source package. Every static vector checks the expected decision and primary reason, Decision-Derivation Verification, and Semantic Replay. Lifecycle checks cover mutation of a derivation and ERC, denial of Grant Issuance from a \texttt{DENY} ERC, candidate and policy-version changes, terminal states, and duplicate Redemption.

\begin{table*}[t]
\centering
\small
\setlength{\tabcolsep}{4pt}
\renewcommand{\arraystretch}{1.12}
\begin{tabular}{@{}>{\raggedright\arraybackslash}p{0.23\textwidth}>{\raggedright\arraybackslash}p{0.42\textwidth}>{\raggedright\arraybackslash}p{0.30\textwidth}@{}}
\toprule
\textbf{Validation component} & \textbf{Workload} & \textbf{Observed result} \\
\midrule
Static conformance corpus & 34 canonical, malformed, policy, intent, and evidence-state vectors & 34/34 matched decision, reason, derivation-verification, and replay oracles. \\
Lifecycle and mutation checks & 15 ERC, derivation, issuance, binding, expiry, revocation, and consumption checks & 15/15 matched the specified outcome. \\
Concurrent Redemption & 100 trials, each with 32 requests for one \texttt{ISSUED} grant (3,200 attempts) & Exactly one successful Redemption and one protected test effect per trial. \\
Revoke--Redeem race & 100 two-party races over one authoritative grant state & 100/100 ended in one valid terminal outcome with no effect after revocation. \\
\bottomrule
\end{tabular}
\caption{Preliminary executable validation for the included financial-transfer profile. Results establish behavior only for the artifact, environment, and test oracles described here.}
\label{tab:preliminary-results}
\end{table*}

During development, the corpus exposed a disagreement over whether an empty target was a structural error or an intent-binding failure; aligning both paths with the declared schema removed the disagreement. This is evidence that reason-code precedence and independent recomputation are testable rather than merely descriptive. The final retained run passed all reported checks. The artifact does not constitute mechanized verification, a production payment implementation, or evidence of complete mediation, and its local timing output is explicitly non-normative.

\subsection{Semantic Conformance Evaluation}\label{semantic-conformance-evaluation}

Let a test input be the closed semantic bundle

\[
B_v=\langle v,K,P,I,x,Ev,Ctx,t\rangle,
\]

where \(x\) is one canonical, fully materialized candidate and \(v\) is the EBL-Core version. An adjudicator evaluates the bundle as

\[
A(B_v)=\langle d,r,\pi\rangle.
\]

A conforming verifier checks:

\[
VerifyDerivation(B_v,d,r,\pi)=true.
\]

A conformance corpus should contain valid, invalid, boundary, and malformed bundles together with expected decisions, reason codes, derivation-verification conditions, ERC Generation outcomes, Grant Issuance outcomes, and Redemption outcomes.

\subsubsection{Input Closure}\label{input-closure}

For

\[
A(B_v)=\langle d,r,\pi\rangle
\]

and

\[
A(B'_v)=\langle d',r',\pi'\rangle,
\]

input closure and determinism require:

\[
Canon_v(B_v)=Canon_v(B'_v)
\Longrightarrow
d=d'\land r=r'.
\]

They do not require \(\pi=\pi'\). Instead, both supplied derivations must satisfy:

\[
VerifyDerivation(B_v,d,r,\pi)=true
\]

and

\[
VerifyDerivation(B'_v,d',r',\pi')=true.
\]

A conforming adjudicator must not obtain decision-relevant information from undeclared clocks, mutable process state, network lookups, or implementation-local defaults. Such information must first be materialized in the closed bundle.

Cross-implementation comparison should include:

\begin{itemize}
\tightlist
\item
  decision equality;
\item
  reason-code equality under the specified precedence rules;
\item
  acceptance of the same canonical inputs;
\item
  rejection of the same malformed inputs;
\item
  successful verification of every accepted Decision Derivation; and
\item
  compatible ERC verification and lifecycle outcomes.
\end{itemize}

The comparison must not require byte-identical Decision Derivations.

\subsubsection{Intent-binding tests}\label{intent-binding-tests}

Intent-binding vectors should include both permitted refinements and prohibited expansions.

Valid cases should cover:

\begin{itemize}
\tightlist
\item
  specialization of parameters within an authorized range;
\item
  selection of exactly one resource from a finite set explicitly encoded in the trusted intent, followed by materialization of that selected resource as the single canonical candidate;
\item
  reduction of an amount, duration, or effect scope;
\item
  selection of an explicitly permitted target; and
\item
  materialization of implementation details that do not alter the authorized effect.
\end{itemize}

Invalid cases should cover:

\begin{itemize}
\tightlist
\item
  substitution of a different recipient or target;
\item
  expansion of the affected resource set;
\item
  privilege escalation;
\item
  increase of financial, temporal, or physical effect;
\item
  substitution of an approval issued for another intent;
\item
  use of an approver outside the required authority scope; and
\item
  reinterpretation of an ambiguous intent in a more permissive direction.
\end{itemize}

A decision for one member of an intent-authorized set does not authorize an unspecified member at Redemption. Any mutation that changes the canonical candidate after ERC Generation requires a new adjudication.

\subsubsection{Evidence-State Tests}\label{evidence-state-tests}

For each obligation in \(Q_K\cup Q_P\), the corpus should exercise:

\[
\{\textsf{VALID},\textsf{UNKNOWN},\textsf{MISSING},
\textsf{EXPIRED},\textsf{CONFLICT}\}.
\]

Only \texttt{VALID} may discharge a positive obligation. \texttt{UNKNOWN}, \texttt{MISSING}, \texttt{EXPIRED}, and \texttt{CONFLICT} must not produce \texttt{ALLOW} for the candidate under adjudication. If a system wishes to propose a different reduced-risk or recovery operation, it must materialize that operation as a new canonical candidate and adjudicate it under the applicable policies and obligations.

The corpus should additionally test:

\begin{itemize}
\tightlist
\item
  evidence with the wrong type;
\item
  evidence supplied by an inadmissible provider;
\item
  evidence bound to another intent or candidate;
\item
  evidence at both sides of its validity boundary;
\item
  conflicting records from multiple providers;
\item
  equivalent evidence inputs in different representational orders;
\item
  undeclared evidence-resolver state; and
\item
  evidence that expires between adjudication and Redemption.
\end{itemize}

These tests evaluate obligation resolution and binding. They do not establish that an evidence provider's assertion is true.

\subsubsection{Policy-Composition Tests}\label{policy-composition-tests}

Policy evaluation should test three distinct properties.

First, \textbf{root-policy dominance} requires that Operational Policy neither alter the Root Policy predicate nor remove, weaken, rename, or reinterpret \(Q_K\). An Operational Policy permission cannot override a Root Policy denial.

Second, \textbf{conjunctive rule addition} should be tested by adding an Operational Policy restriction and confirming that the admitted candidate set does not expand. The resulting operational obligations must be added through \(Q_P\), without modifying \(Q_K\).

Third, \textbf{policy-version evolution} should be tested through the separately defined compatibility relation \(Compat_K(P_{old},P_{new})\). Version order, naming, or the presence of additional rules does not establish compatibility. Even when a change is shown to be non-expanding under fixed \(K\), compatibility does not automatically preserve an existing grant. Baseline ERC and grant reuse requires exact policy-version equality. A Root Policy change requires new adjudication.

The corpus should therefore include:

\begin{itemize}
\tightlist
\item
  an Operational Policy that attempts to permit a Root Policy denial;
\item
  deletion or reinterpretation of an obligation in \(Q_K\);
\item
  addition of an operational restriction and corresponding \(Q_P\);
\item
  an Operational Policy update that expands the permitted set;
\item
  a non-expanding update satisfying the accepted compatibility procedure;
\item
  an exact policy-version mismatch at ERC verification or Redemption;
\item
  an attempted reuse of a grant after a compatible Operational Policy update;
\item
  a Root Policy version change; and
\item
  policies exceeding the profile's declared structural or evaluation bounds.
\end{itemize}

Unsupported or out-of-bounds policies must be rejected explicitly rather than evaluated through implementation-dependent behavior.

\subsection{Adversarial Mutation Evaluation}\label{adversarial-mutation-evaluation}

Adversarial mutation testing begins with a valid trace:

\[
\tau=\langle B,d,r,\pi,erc,g,state(g)\rangle,
\]

where \(B\) contains one canonical candidate, the ERC verifies, and \(g\) is in the \texttt{ISSUED} state. A mutation changes one decision-relevant component while holding the others constant. The test oracle then evaluates the mutation at adjudication, ERC verification, Grant Issuance, or Redemption, according to when it occurs.

\begin{table*}[t]
\centering
\small
\setlength{\tabcolsep}{4pt}
\renewcommand{\arraystretch}{1.12}
\begin{tabular}{@{}>{\raggedright\arraybackslash}p{0.18\textwidth}>{\raggedright\arraybackslash}p{0.38\textwidth}>{\raggedright\arraybackslash}p{0.39\textwidth}@{}}
\toprule
\textbf{Mutation class} & \textbf{Example} & \textbf{Required response} \\
\midrule
Candidate mutation & A transfer to Alice is changed to a transfer to Bob. & Reject the existing decision, ERC, and grant binding; require new adjudication. \\
Evidence mutation & Evidence is replaced, expires, or is rebound after adjudication. & Reject the existing commitment or require new adjudication. \\
Policy mutation & A Root or Operational Policy version changes. & Reject on baseline version mismatch; \(Compat_K\) does not automatically preserve the grant. \\
Obligation mutation & \(Q_K\) is removed or reinterpreted, or \(Q_P\) changes. & Reject; obligation identity and resolution semantics are decision-relevant. \\
Context mutation & A balance, permission, device state, environment, or risk classification changes. & Deny when the committed version or current-state predicate no longer holds. \\
Derivation mutation & A step, premise, rule identifier, or conclusion in \(\pi\) changes. & Reject unless the resulting Decision Derivation independently verifies. \\
ERC substitution & An ERC is presented with a different candidate, bundle, issuer, or grant. & Reject the inconsistent binding chain. \\
Grant-state mutation & A \texttt{CONSUMED}, \texttt{EXPIRED}, or \texttt{REVOKED} grant is presented as usable. & Deny; terminal states cannot return to \texttt{ISSUED}. \\
Concurrent Redemption & Two requests redeem the same \texttt{ISSUED} grant concurrently. & At most one request may complete the linearized \texttt{ISSUED}-to-\texttt{CONSUMED} transition and protected effect. \\
Check-effect race & Decision-relevant state changes between validation and effect. & Deny unless validation, state transition, and effect can be linearized. \\
\bottomrule
\end{tabular}
\end{table*}

The baseline profile permits one successful Redemption for a grant instance. The mutation corpus must not use a reusable or bounded-use grant as its baseline. A second Redemption attempt returns \texttt{DENY}, including when the attempts overlap in time.

The principal metrics are:

\begin{itemize}
\tightlist
\item
  agreement with the expected decision or lifecycle outcome;
\item
  reason-code correctness;
\item
  detection stage;
\item
  whether re-adjudication is required;
\item
  false acceptance of a decision-relevant mutation;
\item
  false rejection of representations declared canonically equivalent; and
\item
  the number of successful effects produced by concurrent attempts against one grant.
\end{itemize}

\subsection{Baseline Abstraction Comparison}\label{baseline-comparison-methodology}

Table~\ref{tab:abstraction-matrix} provides the completed qualitative mapping for six close baselines. It uses the same native, adapted, external, and unestablished categories proposed by the evaluation design and gives each baseline credit for its documented primary abstraction. The result supports a limited positioning claim: the mandatory EBL-Core contract is not native as a whole to any compared abstraction. It does not show that those mechanisms are unable to express the conditions, nor does it establish that an EBL-Core implementation is more secure or efficient.

The current mapping is literature-based rather than adapter-based. A stronger evaluation requires concrete Cedar, Rego, capability, and agent-runtime adapters applied to the same conformance traces. Such work should record which guarantees are inherited from the substrate and which are supplied by wrapper code, then test cross-implementation ERC and Redemption outcomes. The included artifact does not yet perform this adapter comparison.

\subsection{Scenario-Based Evaluation}\label{scenario-based-evaluation}

Section~\ref{worked-transfer-example} and the accompanying corpus instantiate one complete financial-operation trace. Extending the same evaluation to infrastructure, deployment, disclosure, and physical-actuation profiles remains future work. Each additional scenario should provide:

\begin{enumerate}
\def\labelenumi{\arabic{enumi}.}
\tightlist
\item
  a structured intent;
\item
  one canonical, fully materialized candidate action;
\item
  versioned Root and Operational Policies;
\item
  separately identified \(Q_K\) and \(Q_P\);
\item
  complete typed evidence, explicit context, and explicit time;
\item
  an expected decision, reason code, and verifiable Decision Derivation;
\item
  the expected ERC Generation outcome;
\item
  Grant Issuance and Redemption outcomes for verified \texttt{ALLOW} cases; and
\item
  adjudication-time, ERC-time, and Redemption-time mutations.
\end{enumerate}

The broader corpus is intended to contain the following scenarios.

\begin{table*}[t]
\centering
\small
\setlength{\tabcolsep}{4pt}
\renewcommand{\arraystretch}{1.12}
\begin{tabular}{@{}>{\raggedright\arraybackslash}p{0.18\textwidth}>{\raggedright\arraybackslash}p{0.38\textwidth}>{\raggedright\arraybackslash}p{0.39\textwidth}@{}}
\toprule
\textbf{Scenario} & \textbf{Candidate-action binding} & \textbf{Representative obligations and failure probes} \\
\midrule
Financial operation & Exact source account, asset, amount, recipient, network, and transaction parameters & Approval authority and validity, balance evidence, amount limits, recipient substitution, fee or network changes, and duplicate redemption \\
Infrastructure change & Exact plan or configuration digest, target resources, environment, and intended transition & Change window, operator scope, pre-state commitment, rollback readiness, target expansion, and production-state drift \\
Software deployment & Exact artifact digest, service, release configuration, and destination environment & Artifact provenance, approval scope, environment constraints, canary or rollback conditions, artifact substitution, and approval expiry \\
Data disclosure & Exact data fields, recipient, channel, purpose, and disclosure operation & Recipient authority, data classification, scope limitation, context-dependent restrictions, field expansion, and channel substitution \\
Physical actuation & Exact command, device, parameters, duration, and actuation interface & Device state, safety envelope, operator authority, stale sensor evidence, parameter expansion, and delayed redemption \\
\bottomrule
\end{tabular}
\end{table*}

Each future scenario should include at least one permitted trace, denials for all relevant non-\texttt{VALID} obligation states, a Root Policy denial, an Operational Policy restriction, mutations after ERC Generation, and---where a grant is issued---lifecycle and concurrency tests after Grant Issuance. Ambiguities that cannot be represented without application-specific semantics should be recorded as profile gaps or extension requirements rather than resolved through unstated evaluator behavior.

Scenario coverage would show that the same conformance abstraction applies across several effect domains. It would not establish that EBL-Core captures every domain-specific safety property or guarantees a safe real-world outcome.

\subsection{Performance-Evaluation Scope}\label{performance-evaluation-for-a-future-artifact}

Performance is a property of an implementation rather than the semantic definition. The included artifact can emit indicative local timings, but the retained validation result and the claims in this paper do not depend on them. A standard-library Python model, one profile, and an in-memory store do not provide a meaningful production-performance baseline. Operational measurements for production-oriented implementations should therefore be reported separately from semantic conformance results.

The evaluation should vary policy size, the number of obligations in \(Q_K\) and \(Q_P\), evidence-set size, candidate complexity, context size, and Decision Derivation depth. It should separately measure canonicalization, adjudication, derivation construction, ERC Generation, cryptographic operations, Decision-Derivation Verification, Grant Issuance, and Redemption.

Such an evaluation should include:

\begin{itemize}
\tightlist
\item
  adjudication latency distributions;
\item
  serialized ERC and Decision Derivation sizes;
\item
  Decision-Derivation Verification latency;
\item
  Semantic Replay cost;
\item
  Grant Issuance and Redemption latency;
\item
  peak and steady-state memory use;
\item
  scaling with policy rules and Evidence Obligations; and
\item
  observed and analytically derived worst-case evaluation steps.
\end{itemize}

Decision-Derivation Verification checks whether a supplied derivation supports the committed result under the applicable rules. Semantic Replay reconstructs the decision from the committed semantic inputs. These operations should be measured independently.

External evidence acquisition should be excluded from adjudication latency unless it is explicitly part of the implementation under test. Performance comparisons should also match semantic work: an EBL-Core path that generates and verifies an ERC should not be compared directly with a baseline invocation that returns only a Boolean decision.

No universal latency or throughput threshold follows from EBL-Core. Acceptable operational bounds depend on the Effector domain, while termination and profile-defined evaluation limits remain conformance requirements.

\subsection{Artifact Status and Roadmap}\label{artifact-roadmap}

The current package completes a bounded subset of the artifact plan: machine-readable schemas, profile-specific canonicalization and commitments, one deterministic reference adjudicator, a separate verifier and Semantic Replay path, a mutation corpus, and lifecycle and concurrency tests for one financial-transfer profile. It is intended as an executable specification and test oracle, not as a production security component. Because the adjudicator and verifier share canonicalization and commitment primitives and were developed in the same artifact, the package does not claim implementation independence.

The next validation stages are independently developed adjudicators and verifiers, additional domain profiles, cross-implementation vectors, and adapters for representative authorization engines, agent runtimes, proof systems, and capability mechanisms. Each adapter should identify which required semantic properties are native, adapted, external, or unestablished. Performance characterization, mechanized correspondence between the formal rules and executable model, and production deployment remain separate stages requiring stronger evidence, Effector integration, authority analysis, and bypass evaluation.

\subsection{Limitations}\label{limitations}

The proposed evaluation cannot establish the following properties:

\begin{itemize}
\tightlist
\item
  that a natural-language instruction was translated into the correct trusted intent;
\item
  that an evidence provider's assertion is true;
\item
  that every execution path in a deployment passes through the designated boundary;
\item
  that the root policy is complete or substantively correct;
\item
  that arbitrary deployments provide complete mediation;
\item
  that issuer keys, policy authorities, or evidence providers are operationally uncompromised; or
\item
  that an authorized command produces the intended physical or external outcome.
\end{itemize}

Mutation testing can show that the specified boundary rejects tested classes of changed inputs. It cannot establish the absence of hidden Effectors, side channels, unmodeled state, or alternative authority paths. Decision-Derivation Verification and Semantic Replay can establish how a decision follows from committed inputs under the profile semantics; neither establishes that those inputs accurately represent the external world.

The evidence required for each claim category must therefore remain explicit:

\begin{table*}[t]
\centering
\small
\setlength{\tabcolsep}{4pt}
\renewcommand{\arraystretch}{1.12}
\begin{tabular}{@{}>{\raggedright\arraybackslash}p{0.18\textwidth}>{\raggedright\arraybackslash}p{0.38\textwidth}>{\raggedright\arraybackslash}p{0.39\textwidth}@{}}
\toprule
\textbf{Claim category} & \textbf{Appropriate validation evidence} & \textbf{What the evidence does not establish} \\
\midrule
Semantic conformance & Formal definitions, canonical vectors, differential evaluation, and derivation verification & Truth of inputs or complete mediation \\
Implementation behavior & Mutation tests, performance measurements, resource bounds, and verifier independence & Correctness of policies or external providers \\
Deployment security & Architecture review, effector integration, authority configuration, key management, and bypass analysis & Universal safety outside the evaluated deployment \\
\bottomrule
\end{tabular}
\end{table*}

Under these limitations, EBL-Core remains testable as a conformance profile. Its semantic properties can be evaluated through canonical inputs, decision and reason-code agreement, binding checks, Decision-Derivation Verification, lifecycle tests, and Semantic Replay. Implementation claims require executable artifacts and measurements, while deployment-security claims remain conditional on explicit authority, trust, Effector-fidelity, and mediation assumptions.

\section{Discussion}\label{discussion}

EBL-Core defines a conformance profile connecting adjudication to the release and Redemption of action-scoped execution authority. Its contribution is narrower than a complete authorization architecture or AI safety system. This section clarifies the distinct semantic roles in that profile and the deployment assumptions under which they are meaningful.

\begin{table*}[t]
\centering
\small
\setlength{\tabcolsep}{4pt}
\renewcommand{\arraystretch}{1.12}
\begin{tabular}{@{}>{\raggedright\arraybackslash}p{0.18\textwidth}>{\raggedright\arraybackslash}p{0.38\textwidth}>{\raggedright\arraybackslash}p{0.39\textwidth}@{}}
\toprule
\textbf{Claim area} & \textbf{EBL-Core defines} & \textbf{EBL-Core does not establish} \\
\midrule
Intent binding & A relation between a trusted intent object and one canonical candidate & Correct understanding of natural-language or human intent \\
Candidate identity & Commitments binding adjudication, ERC, grant, and Redemption to the same candidate & That the candidate is beneficial, complete, or error-free \\
Evidence & Obligation-relative states, bindings, freshness, and conflict handling & External truth, completeness, or provider honesty \\
Policy & Root-policy dominance, \(Q_K/Q_P\) separation, conjunctive addition, and version rules & Policy correctness or secure Root Policy administration \\
Decision & Deterministic decision and reason-code semantics with Decision-Derivation Verification & Correctness of unmodeled inputs or trusted authorities \\
ERC & A decision-binding release-condition object & An inherently authority-bearing token or proof of execution \\
Authority release & Grant Issuance from a verified \texttt{ALLOW} ERC & A new general-purpose capability primitive \\
Redemption & Single-use grant states and linearized validation, consumption, and protected effect & Universal mediation or faithful execution by every Effector \\
Outcome & A basis for relating release decisions to separately collected evidence & Proof that an external effect occurred or had the intended result \\
Security & Semantic properties required of conforming implementations under stated assumptions & Deployment security without authority, key, Effector, and bypass analysis \\
AI safety & A constrained interface between proposed actions and execution authority & General alignment, safe planning, or universal prevention of harmful actions \\
\bottomrule
\end{tabular}
\caption{Summary of EBL-Core claim boundaries. The profile specifies conditional semantic properties, while deployment and external-world claims require additional evidence.}
\label{tab:claim-boundaries}
\end{table*}

\subsection{EBL-Core Is Complementary to Authorization, Not a Replacement}\label{ebl-core-is-complementary-to-authorization-not-a-replacement}

Authorization mechanisms can evaluate both broad permission classes and highly specific structured requests. EBL-Core does not distinguish itself by assuming that authorization is limited to coarse-grained decisions. Instead, it asks whether an integration preserves a prescribed release-and-redemption contract for one canonical candidate.

The execution-boundary question is:

\begin{quote}
What minimum semantic release-and-redemption contract must hold before one canonical, fully materialized AI-generated candidate may receive action-scoped execution authority?
\end{quote}

Cedar, OPA/Rego, XACML, capability systems, and other mechanisms may supply policy evaluation or authority-management substrates. An EBL-Core integration may invoke them during adjudication and incorporate their results into a Decision Derivation.

The residual contract jointly binds the established intent, canonical candidate, Root and Operational Policy versions, \(Q_K\), \(Q_P\), evidence, context, time, decision, and Decision Derivation. It then distinguishes ERC Generation, Grant Issuance, and Redemption.

EBL-Core should consequently be evaluated by whether implementations preserve these bindings and lifecycle roles, not by whether it replaces the expressiveness of existing authorization systems.

\subsection{ERC, Execution Grant, and Redemption Are Distinct Semantic Roles}\label{erc-execution-grant-and-redemption-are-distinct-semantic-roles}

An \textbf{Execution Release Contract} is a decision-binding release-condition object. It commits to an adjudication result and the conditions under which that result may support the release and Redemption of authority. It is not inherently authority-bearing.

An \textbf{Execution Grant} represents the action-scoped execution authority released by a Grant Issuer from a verified \texttt{ALLOW} ERC. Under the baseline profile, it is bound to one canonical candidate, begins in \texttt{ISSUED}, and can support at most one successful Redemption.

\textbf{Redemption} is the governed operation that verifies the ERC, Decision Derivation, candidate binding, current conditions, authority scope, validity interval, and grant state. Successful Redemption linearizes validation, the protected effect, and the transition from \texttt{ISSUED} to \texttt{CONSUMED}.

A deployment may encode an ERC and an Execution Grant in the same transport envelope or cryptographic credential. A conforming implementation must nevertheless preserve their distinct semantic roles: the ERC records release conditions, the grant represents released authority, and Redemption governs its exercise.

An \texttt{ALLOW} ERC asserts that the committed candidate satisfied the profile's adjudication conditions. The ERC permits an independent verifier to check that assertion; it does not itself establish that Grant Issuance occurred. Possession of either an ERC or grant also does not establish successful Redemption, faithful execution, or an external outcome. Those claims require separate runtime and evidentiary support.

\subsection{Intent Limitations}\label{intent-limitations}

EBL-Core does not solve natural-language intent understanding. Its trusted semantic boundary begins after an Intent Authority has produced a structured intent object.

The Intent Authority may obtain that object from a human instruction, workflow definition, approved plan, organizational process, or another trusted source. Determining whether this translation accurately captures a person's actual intention is outside the EBL-Core adjudication model.

Under EBL-Core semantics, a candidate accepted for Redemption must have the same canonical identity as the candidate committed by the verified ERC and Execution Grant. If the Effector faithfully realizes that candidate and the protected effect is reachable only through the governed interface, the resulting operation corresponds to the committed candidate. Effector fidelity and exclusive mediation are deployment assumptions.

The corresponding semantic statement is:

\begin{quote}
If Redemption succeeds, the candidate accepted by the Redemption Interface is bound to the trusted intent object under the specified \texttt{Bound} relation.
\end{quote}

This does not establish that the intent object correctly represents a human's actual intention or that the Effector produced the intended external outcome.

For example, an intent object may correctly bind a payment to a named recipient while still containing a recipient selected through an erroneous upstream interpretation. EBL-Core can detect later substitution of that recipient, but it cannot determine that the original selection was mistaken unless such information is supplied through policy, evidence, or a corrected intent object.

This distinction is important in AI-agent safety analysis. Intent binding constrains execution relative to a trusted representation; it does not establish the semantic correctness of that representation.

\subsection{Evidence Limitations}\label{evidence-limitations}

EBL-Core specifies evidence identity, type, source admissibility, binding, freshness, validity intervals, conflict handling, and obligation resolution. It distinguishes \texttt{VALID}, \texttt{UNKNOWN}, \texttt{MISSING}, \texttt{EXPIRED}, and \texttt{CONFLICT}, and prohibits a non-\texttt{VALID} state from discharging a positive obligation in \(Q_K\cup Q_P\).

These semantics do not establish evidence truth. EBL-Core cannot determine that a provider is honest, a sensor is accurate, an approval was informed, or every relevant fact was observed. Such claims require provider trust, measurement integrity, authority analysis, and domain-specific validation.

Execution-lineage, provenance, and outcome-verification systems address the complementary question of what evidence supports claims about an execution and its result. EBL-Core instead classifies supplied evidence against declared obligations before authority release. A conforming implementation may consume compatible provenance or attestation records, but \texttt{VALID} means only that a record discharges a specified obligation under the declared resolution semantics.

\subsection{Deployment Assumptions and Complete Mediation}\label{deployment-assumptions-and-complete-mediation}

A semantic requirement and a deployment property must be distinguished.

The relevant semantic requirement is:

\begin{quote}
A conforming redemption requires a valid execution grant whose bindings and redemption conditions hold.
\end{quote}

The corresponding deployment property is:

\begin{quote}
Every path capable of producing the protected external effect requires such a grant.
\end{quote}

EBL-Core specifies the former. Establishing the latter requires analysis of the deployed architecture.

A conforming adjudicator and verifier do not prevent an agent from reaching an unmediated administrative interface, invoking an alternative tool, using an independently held credential, or communicating with a component that does not enforce the grant. Nor does the profile prove that all signing keys are protected or that every Intent Authority, Policy Authority, Evidence Provider, Grant Issuer, and Effector behaves correctly.

Complete mediation therefore depends on the placement and authority of the execution boundary. A deployment must identify all components capable of producing the protected effect, restrict alternative authority paths, and ensure that effectors validate grants before acting. Hardware-backed boundaries or isolated execution environments may strengthen these assumptions, but they do not follow from the EBL-Core semantics alone.

Claims about an EBL-Core deployment must consequently state:

\begin{itemize}
\tightlist
\item
  which effects are considered protected;
\item
  which effectors can produce those effects;
\item
  where grants are validated;
\item
  which components and keys are trusted;
\item
  which alternative authority paths have been excluded; and
\item
  which failures remain outside the model.
\end{itemize}

Global non-bypassability is not a claim of this work.

\subsection{Adjacent Analytical Boundaries}\label{adjacent-analytical-boundaries}

EBL-Core does not determine which components, credentials, administrators, or coalitions can reach a protected effect through ordinary, recovery, update, or alternative paths. That is an authority-topology and bypass-analysis problem. The Execution Grant is a semantic object in the declared release lifecycle; its presence does not prove that the deployed architecture makes the grant causally necessary.

Nor does EBL-Core establish execution lineage or terminal outcomes. Provenance and outcome-verification mechanisms may supply evidence consumed by adjudication or retained after Redemption, but their production, completeness, and truth properties require separate analysis. Conversely, those mechanisms do not determine whether a candidate satisfied the release contract at decision time.

The profile is self-contained at these interfaces: it defines the authority scope, evidence obligations, candidate, policies, ERC, grant, and Redemption predicates needed for its own conformance claims. Authority-topology and execution-lineage analyses can strengthen deployment evidence without becoming prerequisites for the core semantics.

\subsection{Research Implications}\label{research-implications}

If the execution-release abstraction proves useful across implementations, future work can extend the included schemas, adjudicator, verifier, and corpus into interoperable ERC encodings and adapters for established policy and capability systems. Independently developed implementations would permit stronger differential tests than the two code paths in the current artifact.

Mechanized semantics could examine determinism, root-policy dominance, evidence-state treatment, intent refinement, version compatibility, and lifecycle preservation. Domain profiles for infrastructure, disclosure, deployment, and physical actuation could refine evidence obligations and reason codes without weakening the mandatory bindings. Runtime or hardware-backed integration may strengthen key protection, mediation, and Effector assumptions; those mechanisms alter deployment assurance rather than the semantic contract.

\section{Conclusion}\label{conclusion}

AI agents increasingly move from generating information to proposing actions that can modify infrastructure, deploy software, transfer assets, disclose information, or actuate physical systems. Authorization systems, policy engines, runtime monitors, provenance mechanisms, and agent guardrails provide important foundations for governing such actions. The interfaces among these mechanisms, however, do not necessarily impose the same semantic conditions on the final transition from one candidate action to execution authority.

This paper defines EBL-Core, an execution-boundary conformance profile that binds a trusted intent object, one canonical and fully materialized candidate, versioned Root and Operational Policies, \(Q_K\) and \(Q_P\), typed evidence, decision-relevant context, explicit time, and a verifiable Decision Derivation through an Execution Release Contract. The ERC is a decision-binding release-condition object rather than inherently authority-bearing. A verified \texttt{ALLOW} ERC may support separate Grant Issuance, while Redemption governs whether the resulting action-scoped authority can be exercised.

EBL-Core defines semantic properties required of conforming implementations, including candidate binding, root-policy dominance, Evidence Obligation handling, deterministic adjudication, Decision-Derivation Verification, and single-use linearized Redemption. These properties do not establish correct human-intent interpretation, evidence truth, universal mediation, global non-bypassability, faithful Effector behavior, or correct external outcomes. Such claims remain conditional on the deployment architecture and stated trust assumptions.

Future work can extend the included executable specification with independently developed adjudicators and verifiers, ERC interoperability experiments, broader conformance suites, integration adapters, and additional domain-specific profiles. These artifacts may provide a foundation for evaluating execution-release semantics across heterogeneous AI-agent systems. The paper defines a semantic contract for when and why an AI-generated action may receive execution authority under explicit assumptions.

\balance
\bibliographystyle{ACM-Reference-Format}
\bibliography{bibliography/references}

\end{document}